\newcommand{\R}{\mathbb{R}}
\newcommand{\N}{\mathcal{N}}
\newcommand{\E}{\mathbb{E}}
\newcommand{\g}{\mathfrak{g}}
\newcommand{\ga}{\left<}
\newcommand{\dr}{\right>}
\newcommand{\rk}{\mbox{rk}}
\newcommand{\invW}{\rotatebox[origin=c]{180}{W}}
\newtheorem{theorem}{Theorem}[section]
\newtheorem{lemma}[theorem]{Lemma}
\begin{document}
\title{Spectral Initialization for High-Dimensional Phase Retrieval with Biased Spatial Directions}
\author{Pierre~Bousseyroux, Marc Potters
\thanks{Pierre Bousseyroux is a student of Ecole Normale Supérieure de Paris, France and at Chair of Econophysics and Complex Systems,
\'Ecole Polytechnique, Palaiseau, France (e-mail : pierre.bousseyroux@ens.psl.eu)}
\thanks{Marc Potters is with Capital Fund Management, 23 rue de l'Universit\'e, 75007 Paris, France.}}

\maketitle

\begin{abstract}
We explore a spectral initialization method that plays a central role in contemporary research on signal estimation in nonconvex scenarios. In a noiseless phase retrieval framework, we precisely analyze the method's performance in the high-dimensional limit when sensing vectors follow a multivariate Gaussian distribution for two rotationally invariant models of  the covariance matrix $\vb{C}$. In the first model $\vb{C}$ is a projector on a lower dimensional space while in the second it is a Wishart matrix. Our analytical results extend the well-established case when $\vb{C}$ is the identity matrix. Our examination shows that the introduction of biased spatial directions leads to a substantial improvement in the spectral method's effectiveness, particularly when the number of measurements is less than the signal's dimension. This extension also consistently reveals a phase transition phenomenon dependent on the ratio between sample size and signal dimension. Surprisingly, both of these models share the same threshold value.
\end{abstract}

\begin{IEEEkeywords}
phase retrieval, spectral initialization, biased spatial directions, Wishart matrix, S-transform, phase transition
\end{IEEEkeywords}

%
\IEEEpeerreviewmaketitle

\section{Introduction}

In various physical measurement scenarios, such as optical systems employing CCD cameras and photosensitive films, it is often only possible to measure the power spectral density, essentially the squared magnitude of a signal's Fourier transform. Unfortunately, these devices cannot capture the phase information of the incoming light, which encodes essential structural details. This challenge of reconstructing a signal solely from its partial information is termed 'phase retrieval.' It has a rich historical background and finds applications in multiple fields.

This problem can be framed mathematically as follows. Let an unknown vector $\vb{x} \in \mathbb{R}^{N}$ (or $\mathbb{C}^{N}$) be 'probed' with $T$ real (or complex) vectors $\vb{a}_k$, in the sense that the measurement apparatus gives $m_k = \left|\left\langle \vb{a}_k, \vb{x} \right\rangle\right|^2$, called intensity measurements, where $k$ ranges from $1$ to $T$.  The phase retrieval problem is

\begin{equation}
\vb{\hat{x}} = \underset{\vb{x}}{\text{argmin}} \left(\sum_{k=1}^T \left(\left|\left\langle \vb{a}_k, \vb{x} \right\rangle\right|^2 - m_k\right)^2\right).
\end{equation}

It belongs to a broad class of generalized linear estimation problems where the $m_k$ are generated with respect to

\begin{equation}\label{P_0}
    P_0(.|\text{ }|\vb{x}_k|)
\end{equation}where $P_0(.|.)$ denotes a conditional distribution modeling a possibly randomized output channel.

While phase retrieval is a non-convex optimization problem with many local minima, it has become very popular in recent years to pursue convex relations of this problem. In this context, a significant breakthrough was achieved with the PhaseLift approach \cite{candes2013phaselift}, which establishes that, under mild conditions and with Gaussian sensing vectors $\vb{a}_k$, PhaseLift can probabilistically recover $\vb{x}$ exactly (up to a global phase factor) when the number of measurements is of the order of $N\log N$.

To establish a mathematically rigorous theory of phase retrieval, it is crucial to investigate whether there exists a unique solution to the noiseless phase-retrieval problem. This has turned out to be quite a challenging question. The real case is solved: \cite{balan2006signal} proved that $T = 2N-1$ is necessary and sufficient to have a unique solution. Using algebraic geometry, \cite{balan2006signal} showed that, in the complex case, the vector $\vb{x}\in \mathbb{C}^N$ is uniquely determined by the $T$ phaseless measurements as soon as $T\geq 4N-2$, but it is a priori NP-hard \cite{fickus2014phase}. Is the bound $4N-2$ necessary? In other words, is the minimum number of measurements necessary for injectivity, denoted as $M(N)$, equal to $4N-2$? In 2013, Heinosaari, Mazzarella, and Wolf (\cite{heinosaari2013quantum}) demonstrated subtle lower bounds for $M(N)$ using embedding theorems from differential geometry. In 2014, the authors of \cite{bandeira2014saving} provided a characterization of injectivity and conjectured that $M(N) = 4N-4$, which they proved for $N = 2, 3$. In 2015, \cite{conca2015algebraic} showed that $M(N)\geq 4N-4$ when $N = 2^k+1$. However, Cynthia Vinzant disproved the conjecture $4N-4$ for $N = 4$ \cite{vinzant2015small}. The value of $M(N)$ remains an open question. Let's keep in mind that the regime where $q := \frac{N}{T}$ is fixed while $N$ and $T$ get large appears to be interesting.

The oldest reconstruction algorithms \cite{gerchberg1972practical} were iterative: they started with a random initial guess of $\vb{x}$ and attempted to refine it using various heuristics. While these algorithms have empirically been seen to succeed in numerous cases, they can also get stuck at stagnation points, the existence of which is due to the non-convex nature of the problem. \cite{waldspurger2018phase} demonstrated that if $T\geq C N$ for a sufficiently large constant $C$, alternating projections succeed with high probability, provided they are carefully initialized. Alternating projections, introduced by \cite{gerchberg1972practical}, is the most ancient algorithm for phase retrieval. Wirtinger Flow is a gradient descent algorithm proposed by \cite{candes2015phase}, which comes with a rigorous theoretical framework. Initialization is a crucial aspect of non-convex optimization to avoid local minima.

As of today, the best-known polynomial time algorithm for the phase-retrieval problem is the approximate message-passing algorithm (AMP), generalized as the Generalized AMP (GAMP) algorithm, introduced in \cite{schniter2014compressive}.

However, most of the algorithms mentioned above require an initialization $\vb{y}$ that is correlated with the true signal $\vb{x}$ in the sense that the overlap $\rho$, defined as $\rho:= \frac{|\left\langle \vb{x}, \vb{y} \right\rangle|^2}{||\vb{x}||^2 ||\vb{y}||^2}$, is strictly positive. Maximizing the overlap $\rho$ is called the weak-recovery problem. For this purpose, spectral methods are widely employed.  Consider the following matrix:

\begin{eqnarray}
\vb{M} &:=& \frac{1}{T} \vb{H} \vb{D_0} \vb{H}^\dagger\\
&=&\frac{1}{T} \sum_{k=1}^T f(m_k)\vb{a}_k \vb{a}_k^\dagger
\end{eqnarray}
where $\vb{a}_k$ are the columns of $\vb{H}$, and $\vb{D}_0 = \text{diag}\left(f(m_k)\right)_{1\leq k\leq T}$ with $f$ being a certain function.

One can assume that the eigenvector of the largest eigenvalue of $\vb{M}$ provides a good estimate of $\vb{x}$. Why is it a good idea? It's a well-established fact that optimization methods are effective in generating solutions. Our objective is to determine the extent to which we should move in the direction of $\vb{a}_k$ given $m_k$. In other words, we need to select the value of $\left\langle \vb{a}_k, \vb{y} \right\rangle$. Due to the symmetry $\vb{y} \to -\vb{y}$ (or $\vb{y} \to \mathrm{e}^{\mathrm{i}\theta}\vb{y}$), we actually need to decide on $|\langle \vb{a}_k, \vb{y} \rangle|^2$ based on the information from $m_k$. It is natural to consider

\begin{equation}\label{quantity}
\frac{1}{T}\sum_{k=1}^T f(m_k) \left|\left\langle \vb{a}_k, \vb{y} \right\rangle\right|^2
\end{equation}
where $f(m_k)$ are some weights that depend on $m_k$, and $f$ has to be chosen. We aim to maximize this quantity, hoping that the maximum will be a good candidate for our phase retrieval problem. \eqref{quantity} can be expressed as $\left\langle \vb{y}, \vb{M}\vb{y} \right\rangle$, and we know that the unit vector which maximizes $\left\langle \vb{y}, \vb{M}\vb{y} \right\rangle$ is the eigenvector corresponding to the largest eigenvalue of $\vb{M}$ since $\vb{M}$ is symmetric (or Hermitian).

There has been an extensive amount of work on phase retrieval with a random matrix $\vb{H}$. The optimal function $f$ and the overlap $\rho$ between $\vb{x}$ and the largest eigenvector of $\vb{M}$ are deeply analyzed in \cite{luo2019optimal}, \cite{lu2020phase} in a general context where $P_0$, introduced in \eqref{P_0}, is different from $\delta(y - |z|^2)$ while $\vb{H}$ is Gaussian. Interestingly, $\rho$ exhibits a phase transition in the sense that when $q$ is beyond a certain value $q_c$ called the weak reconstruction threshold, $\rho$ becomes zero. This weak-recovery transition was previously identified in \cite{mondelli2018fundamental}, where it is stated that $q_c = 1$ for the complex Gaussian matrix scenario and $q_c = 2$ in the case of a real-valued Gaussian matrix. This is reminiscent of how the injectivity thresholds are $q = \frac{1}{4}$ and $q = \frac{1}{2}$ in the complex and the real case, respectively. The phenomenon may be explained by the complex problem having twice as many variables but the same number of equations as the real problem, implying a need for twice as much data in the complex case. 

From now on we will only consider the real case but the methods below can be applied to the complex case with similar results. We take the vectors $\vb{a}_1, ..., \vb{a}_T$ to be independent and identically distributed random multivariate Gaussian variables with a normalized covariance $\vb{C}$ such that $\tau(\vb{C}) = 1$. The ratio $q$ defined by 
$q := \frac{N}{T}$ will be maintained finite while $N$ and $T$ go to infinity. The goal will be to study the overlap $\rho$, defined as $\rho:= \frac{\left\langle \vb{x}, \vb{y} \right\rangle^2}{||\vb{x}||^2 ||\vb{y}||^2}$, where $\vb{y}$ is the eigenvector correspond to the largest eigenvalue of 

\begin{equation}
    \vb{M}:= \frac{1}{T} \sum_{k=1}^T f\left(\left\langle \vb{a}_k, \vb{x} \right\rangle^2\right) \vb{a}_k \vb{a}_k^T.
\end{equation}

The erratum \cite{erratum} of \cite{potters2020first} gives a parametric equation of the curve $(q, \rho(q))$ when $\vb{C} = \vb{1}$ using a framework of free random matrices:

\begin{equation}\label{formula}
    \left\{
    \begin{array}{ccc}
        q(Z) &=& \frac{I_1(Z)}{Z} - I_2(Z)\\
    \rho(Z)&=& \frac{Z^2 I_2'(Z) + I_1(Z)}{Z^2 I_2'(Z) + I_1(Z)-ZI_1'(Z)}
    \end{array}\right.
\end{equation}
for $Z \geq Z^* = 1$, corresponding to $q_c = 2$, where

\begin{eqnarray}
    I_1(Z) &=& \E\left[\frac{a^2 f^2(a^2)}{Z - f(a^2)} + a^2 f(a^2)\right]\\
    I_2(Z) &=&\E\left[\frac{f(a^2)}{Z - f(a^2)}\right]
\end{eqnarray}with $a$ a standard normal variable. We can also read in this erratum that the optimal function wich maximizes the slope of $\rho$ at $0$ is $y\to 1 - \frac{1}{y}$. This framework where $\vb{C} = \vb{1}$ and $f:y\mapsto 1 - \frac{1}{y}$ will be referred to as "the classical case" and $\rho$ will be denoted as $\rho_{cc}$. In this case, note that $I_1$ and $I_2$ can be precisely calculated by introducing the complementary error function erfc.

In our work, $\vb{C}$ is no longer equal to the identity matrix. In other words, the $\vb{a}_k$ remain independent of each other but are no longer rotationally invariant, implying a preference for certain directions.

The rest of the paper is organized as follows. First, we study a case in which the vectors $\vb{a}_k$ are chosen to be orthogonal (rather than being random multivariate Gaussian). We will then study projector matrices $\vb{P}_{\alpha}$, which have two eigenvalues, 0 and $\alpha$. It turns out that we can readily enhance $\rho$ by using such a covariance matrix $\vb{P}_{\alpha}$ that constrains the vectors $\vb{a}_k$ to a subspace of dimension smaller than $N$. Next, we will calculate $\rho$ when the covariance is a Wishart matrix $\vb{C}$, extending the formula \eqref{formula} continuously. This result will be illustrated by numerical simulations. Finally, we will prove that the two models, where $\vb{C} = \vb{C}_\alpha$ or a Wishart matrix, strangely exhibit the same threshold value $q_c$. Section \ref{section:proofs} will be dedicated to the proofs of these results.

\textbf{Conventions.} We use bold capital letters for matrices and bold lowercase letters for vectors. Some useful tools of RMT are summarized in Appendix \ref{app:appendixRMT}.

\section{Main Results}

\subsection{The orthogonal case}\label{SectionA}

One might have the intuition that to optimally recover the vector $\vb{x}$ we should explore as many directions as possible, therefore orthogonal probing vectors $\vb{a}_k$ (possible when when $T\leq N$) might be better than random ones. However, the following theorem reveals that the spectral method is ineffective in this case.

\begin{theorem}\label{orthogonalcase}
Let $\vb{a}_1, ..., \vb{a}_T$ represent the first $T$ columns of an $N\times N$ Haar distributed random orthogonal matrix. Let $\vb{M}$ be
\begin{eqnarray}
    \frac{1}{T} \sum_{k=1}^T f\left(m_k\right) \vb{a}_k \vb{a}_k^T
\end{eqnarray}with $f$ an increasing function and $m_k:=\ga \vb{a}_k, \vb{x} \dr^2$.

Then, in the large $N$ and $T$ limit while $q :=\frac{N}{T}$ remains finite,
\begin{equation}
    \rho(q) := \E\left[\frac{\ga \vb{x}, \vb{y}\dr^2}{||\vb{x}||^2 ||\vb{y}||^2}\right] = 0
\end{equation}
where $\vb{y}$ the eigenvector corresponding to the largest eigenvalue of $\vb{M}$.
\end{theorem}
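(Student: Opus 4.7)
The plan is to exploit the orthonormality of $\vb{a}_1,\ldots,\vb{a}_T$, which renders the spectral decomposition of $\vb{M}$ entirely explicit, and then to reduce the overlap to the maximum entry of a random row of a Haar orthogonal matrix, a quantity that is easily shown to vanish.

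First, I would note that writing $\vb{H}$ for the $N\times T$ matrix with columns $\vb{a}_k$, orthonormality gives $\vb{H}^T\vb{H}=\vb{I}_T$, so the identity
\begin{equation}
\vb{M} = \frac{1}{T}\,\vb{H}\,\vb{D}_0\,\vb{H}^T,\qquad \vb{D}_0 := \mathrm{diag}\bigl(f(m_k)\bigr)_{1\leq k\leq T},
\end{equation}
is already a spectral decomposition: the nonzero eigenvalues of $\vb{M}$ are precisely $f(m_k)/T$ with associated eigenvectors $\vb{a}_k$, while the $(N-T)$-dimensional orthogonal complement of $\mathrm{span}(\vb{a}_1,\ldots,\vb{a}_T)$ is the null space. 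Since $f$ is monotone, $\arg\max_k f(m_k) = \arg\max_k m_k =: k^\star$. Adding a constant to $f$ (which shifts the nonzero eigenvalues without altering the corresponding eigenvectors) allows us to assume the winning eigenvalue is positive, so that $\vb{y}=\vb{a}_{k^\star}$ and
\begin{equation}
\rho(q) = \frac{1}{\|\vb{x}\|^2}\,\E\!\left[\max_{1\leq k\leq T} m_k\right].
\end{equation}

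Next, I would invoke Haar invariance to rotate so that $\vb{x}/\|\vb{x}\| = \vb{e}_1$; then $m_k = \|\vb{x}\|^2\, H_{1k}^2$ and the row $(H_{1k})_{1\leq k\leq N}$ is uniformly distributed on $S^{N-1}$. Writing $H_{1k}^2 = g_k^2 / \|\vb{g}\|^2$ with $\vb{g} \sim \N(0,\vb{I}_N)$ makes the scaling transparent: each $H_{1k}^2$ has law $\mathrm{Beta}(1/2,(N-1)/2)$ with mean $1/N$ and sub-Gaussian fluctuations on the same scale, and a union bound over $T\leq N$ indices yields $\max_{1\leq k\leq T}H_{1k}^2 = O(\log N/N)$ with overwhelming probability. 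Since $H_{1k}^2 \leq 1$ deterministically, bounded convergence then delivers $\E[\max_k H_{1k}^2]\to 0$, whence $\rho(q)\to 0$ in the limit.

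The only technical step is the tail estimate in the previous paragraph, but it is a textbook concentration statement for coordinates of a uniformly random unit vector, so no substantial obstacle arises. The one subtlety worth flagging is the degenerate regime in which every $f(m_k)$ is non-positive: there the largest eigenvalue of $\vb{M}$ is $0$ with multiplicity $N-T$, the top eigenvector may be chosen uniformly in the null space of $\vb{M}$, and spherical symmetry forces the overlap to be $O(1/N)$ regardless, so the conclusion persists.
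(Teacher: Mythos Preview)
Your proof is correct and follows essentially the same route as the paper's: both exploit orthonormality to read off the spectral decomposition of $\vb{M}$, identify the top eigenvector as $\vb{a}_{k^\star}$ with $k^\star=\arg\max_k m_k$, reduce the overlap to the maximum squared coordinate of a uniform point on $S^{N-1}$, and conclude via the standard $O(\log N/N)$ estimate. Your treatment is in fact slightly more careful than the paper's, which passes from the first row to the first column via the $\vb{O}\leftrightarrow\vb{O}^T$ symmetry but does not explicitly address the degenerate regime where all $f(m_k)\le 0$ and the top eigenspace is the $(N-T)$-dimensional kernel.
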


The subtlety lies in the fact that we do not know the sign of $\ga \vb{x}, \vb{a}_k\dr$. Contrastingly, if the $\vb{a}_k$ are nearly orthogonal (allowing for substantial space exploration), one might anticipate that the information from other $y_j = \ga \vb{x}, \vb{a}_j\dr^2$ for $j\neq k$ could provide insights into the sign of the inner product $\ga \vb{x}, \vb{a}_k\dr$. How can we quantify this idea? An interesting quantity would be the

\begin{equation}\label{equality}
    \E\left[\frac{\ga \vb{a}_k, \vb{a}_j \dr^2}{||\vb{a}_k||^2 ||\vb{a}_j||^2}\right]
\end{equation}with $k\neq j$. They should neither be too small because the orthogonal case yields bad results nor too large because we want to explore as many directions as possible. Throughout the paper, we will assume that the $\vb{a}_k$ are Gaussian vectors. in this context, we can establish the following lemma proved in the appendix.

\begin{lemma}
    Let $\vb{a}_1, ..., \vb{a}_T$ be $T$ independent Gaussian vectors in $\R^N$ with a covariance matrix $\vb{C}$ such that $\tau(\vb{C}) = 1$. Let $1\leq k\neq j\leq N$. Then,

\begin{equation}\label{equality}
    \E\left[\frac{\ga \vb{a}_k, \vb{a}_j \dr^2}{||\vb{a}_k||^2 ||\vb{a}_j||^2}\right] = \frac{\tau(\vb{C}^2)}{N}\geq \frac{1}{N}
\end{equation}
with equality when $\vb{C}$ is the identity matrix.
\end{lemma}

Our intuition leads us to believe that when we do not have enough measurements, that is, when $q$ is not close to 0, the second-order moment of $\vb{C}$ will guide the value of $\rho$. In this regime, it will certainly be interesting to choose a large $\tau(\vb{C}^2)$ in order to add significant overlaps between the $\vb{a}_k$.

\subsection{Favoring exploitation over exploration}\label{sectionB}

A natural way to build a simple normalized matrix $\vb{C}$ for a given value of $\tau(\vb{C}^2)$ is to consider $N^* \leq N$ and the $N\times N$ matrix
\begin{equation}\label{matrixlimited}
    \vb{P}_{\alpha} := \vb{O}^T\begin{pmatrix}
    \alpha\vb{I}_{N^*} & \vb{0}\\
    \vb{0} & \vb{0}\\
\end{pmatrix}\vb{O}
\end{equation}with $\alpha = \frac{N}{N^*}\geq 1$ fixed, $\vb{I}_{N^*}$ the $N^*\times N^*$ identity matrix and $\vb{O}$ a Haar distributed random orthogonal matrix. The matrix $\vb{P}_{\alpha}$ is a projector up to a normalization chosen to have $\tau(\vb{P}_{\alpha}) = 1$; note that we have $\tau(\vb{P}_\alpha^2) = \alpha$. We will prove the following theorem.

\begin{theorem}\label{theorem_limited}
Suppose $\vb{a}_1, ..., \vb{a}_T$ are sampled IID from a multivariate Gaussian $\N(\vb{0}, \vb{P}_\alpha)$. $\vb{M}$ still designates the matrix
\begin{eqnarray}
    \vb{M}&=&\frac{1}{T} \sum_{k=1}^T f\left(m_k\right) \vb{a}_k \vb{a}_k^T
\end{eqnarray}with $f:y\mapsto 1 - \frac{1}{y}$ and $m_k:=\ga \vb{a}_k, \vb{x} \dr^2$.
    Then, in the large $N$ and $T$ limit while $q :=\frac{N}{T}$ remains finite,
    \begin{equation}\label{limited}
        \rho(q) := \E\left[\frac{\ga \vb{x}, \vb{y}\dr^2}{||\vb{x}||^2 ||\vb{y}||^2}\right] = \frac{\rho_{cc}\left(\frac{q}{\alpha}\right)}{\alpha}.
    \end{equation}where $\vb{y}$ the eigenvector corresponding to the largest eigenvalue of $\vb{M}$.
\end{theorem}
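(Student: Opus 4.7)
The plan is to reduce the spectral problem defined by $\vb{P}_\alpha$ to the classical one in the effective dimension $N^* = N/\alpha$ by a change of basis. Writing $\vb{O}$ for the Haar orthogonal matrix appearing in the definition of $\vb{P}_\alpha$, I would conjugate everything by $\vb{O}$: the quantities $\vb{a}_k' := \vb{O}\vb{a}_k$ and $\vb{x}' := \vb{O}\vb{x}$ leave the overlap $\rho$ unchanged, and one now has $\vb{a}_k' \sim \N(\vb{0}, \alpha\, \diag(\vb{I}_{N^*}, \vb{0}))$ while $\vb{O}\vb{x}$ is uniformly distributed on the sphere of radius $||\vb{x}||$. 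Normalizing to $||\vb{x}||=1$ and dropping primes, I would then split $\vb{x} = (\vb{x}_\parallel, \vb{x}_\perp)^T$ into its first $N^*$ and last $N-N^*$ coordinates and write $\vb{a}_k = \sqrt{\alpha}\,(\vb{g}_k, \vb{0})^T$ with $\vb{g}_k \sim \N(\vb{0}, \vb{I}_{N^*})$ iid.

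Under this reformulation $m_k = \alpha \ga \vb{g}_k, \vb{x}_\parallel\dr^2$, the matrix $\vb{M}$ is supported on the first $N^*$ coordinates, and its top eigenvector takes the form $\vb{y} = (\tilde{\vb{y}}, \vb{0})^T$ where $\tilde{\vb{y}}$ is the top eigenvector of
\begin{equation*}
\vb{M}_{\text{top}} := \frac{\alpha}{T}\sum_{k=1}^T f\!\left(\alpha \ga \vb{g}_k, \vb{x}_\parallel\dr^2\right) \vb{g}_k \vb{g}_k^T.
\end{equation*}
The decisive rescaling is $\tilde{\vb{x}} := \sqrt{\alpha}\,\vb{x}_\parallel$. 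By concentration of measure on the sphere, $||\vb{x}_\parallel||^2 \to N^*/N = 1/\alpha$ almost surely, so $||\tilde{\vb{x}}||^2 \to 1$. Since $\alpha \ga \vb{g}_k, \vb{x}_\parallel\dr^2 = \ga \vb{g}_k, \tilde{\vb{x}}\dr^2$, the matrix $\vb{M}_{\text{top}}$ equals $\alpha$ times the classical spectral matrix in dimension $N^*$ for the unit-norm signal $\tilde{\vb{x}}$ and the same weighting $f(y)=1-1/y$, and the scalar prefactor $\alpha$ does not affect eigenvectors.

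Formula \eqref{formula} then applies at the reduced aspect ratio $\tilde{q} := N^*/T = q/\alpha$ and yields $\ga \tilde{\vb{x}}, \tilde{\vb{y}}\dr^2/(||\tilde{\vb{x}}||^2\,||\tilde{\vb{y}}||^2) \to \rho_{cc}(q/\alpha)$. Translating back through $\ga \vb{x}, \vb{y}\dr = \ga \vb{x}_\parallel, \tilde{\vb{y}}\dr = \alpha^{-1/2}\ga \tilde{\vb{x}}, \tilde{\vb{y}}\dr$ and $||\vb{y}||=||\tilde{\vb{y}}||$ produces the extra factor of $1/\alpha$:
\begin{equation*}
\rho(q) = \frac{1}{\alpha}\, \E\!\left[\frac{\ga \tilde{\vb{x}}, \tilde{\vb{y}}\dr^2}{||\tilde{\vb{x}}||^2\,||\tilde{\vb{y}}||^2}\right] \longrightarrow \frac{\rho_{cc}(q/\alpha)}{\alpha}.
\end{equation*}

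The main obstacle is to justify the invocation of \eqref{formula} in the reduced problem cleanly. Conditional on $\vb{O}$, the effective signal $\tilde{\vb{x}}$ is a deterministic vector in $\R^{N^*}$ of asymptotic unit norm, independent of the $\vb{g}_k$, which is exactly the setting of \eqref{formula}; however, some care is needed to upgrade the pointwise convergence of the reduced random overlap to a statement about its expectation and to legitimately exchange the limit with the expectation in the display above. This last point follows from the self-averaging of $\rho_{cc}$ in the classical case combined with the concentration of $||\vb{x}_\parallel||^2$ around $1/\alpha$ on the sphere.
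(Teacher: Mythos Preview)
Your proposal is correct and follows essentially the same approach as the paper: both arguments conjugate by the Haar matrix $\vb{O}$ to reduce to a block-diagonal covariance, restrict to the $N^*$-dimensional range, use concentration of $||\vb{x}_\parallel||^2$ around $1/\alpha$ to normalize the projected signal, and then invoke the classical result \eqref{formula} at the reduced aspect ratio $q/\alpha$ to pick up the extra factor of $1/\alpha$. Your write-up is in fact somewhat more explicit than the paper's about the concentration step and the exchange-of-limits issue, which the paper handles with the bare assertion $||\vb{P}\vb{x}||^2 = q^*/q$.
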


There is no restriction on making $\alpha$ depend on $q$. An interesting idea would be to consider $\alpha = \frac{q^*}{q}$ with $q^*$ still to be chosen. Let's consider the $N\times N$ matrix
\begin{equation}
    \vb{P}_{q^*} := \vb{O}^T\begin{pmatrix}
    \frac{q}{q^*}\vb{I}_{N^*} & \vb{0}\\
    \vb{0} & \vb{0}\\
\end{pmatrix}\vb{O}
\end{equation}
with $N^* = q^* T$, $\vb{I}_{N^*}$ the $N^*\times N^*$ identity matrix and $\vb{O}$ a Haar distributed random orthogonal matrix. 
\begin{theorem}\label{theorem1}
    Suppose $\vb{a}_1, ..., \vb{a}_T$ are sampled IID from a multivariate Gaussian $\N(0, \vb{P}_{q^*})$. $\vb{M}$ still designates the matrix
\begin{eqnarray}
    \vb{M}&=&\frac{1}{T} \sum_{k=1}^T f\left(m_k\right) \vb{a}_k \vb{a}_k^T
\end{eqnarray}with $f:y\mapsto 1 - \frac{1}{y}$ and $m_k = \ga \vb{a}_k, \vb{x} \dr^2$.
    Then, in the large $N$ and $T$ limit while $q :=\frac{N}{T}$ remains finite,
    \begin{equation}
        \rho(q) := \E\left[\frac{\ga \vb{x}, \vb{y}\dr^2}{||\vb{x}||^2 ||\vb{y}||^2}\right] = \frac{\rho_{cc}(q^*)q^*}{q}.
    \end{equation}where $\vb{y}$ the eigenvector corresponding to the largest eigenvalue of $\vb{M}$.
\end{theorem}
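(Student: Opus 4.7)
The plan is to recognize Theorem \ref{theorem1} as an immediate specialization of Theorem \ref{theorem_limited} under a reparameterization, rather than a new random-matrix calculation. Comparing the two definitions of the covariance, $\vb{P}_{q^*}$ is exactly $\vb{P}_\alpha$ from \eqref{matrixlimited} under the identification
\begin{equation}
\alpha = \frac{q}{q^*}, \qquad N^* = q^* T.
\end{equation}
Indeed, with this identification, the nonzero eigenvalue of $\vb{P}_{q^*}$ is $q/q^* = \alpha$, there are $N^* = q^* T$ such eigenvalues, and the trace normalization reads $\tau(\vb{P}_{q^*}) = (N^*/N)\alpha = (q^*/q)(q/q^*) = 1$, matching the normalization of $\vb{P}_\alpha$.

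First, I would verify that the hypotheses of Theorem \ref{theorem_limited} are met. For a fixed measurement ratio $q$ and a free parameter $q^* \in (0, q]$ (both constant as $N, T \to \infty$), the parameter $\alpha = q/q^*$ is fixed in the limit and satisfies $\alpha \geq 1$, while $N^* = q^* T \leq N$. The function $f: y \mapsto 1 - 1/y$, the Gaussian distribution of the sensing vectors $\vb{a}_k$, and the spectral-method construction of $\vb{M}$ are identical in the two statements.

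Then I would apply Theorem \ref{theorem_limited} directly with $\alpha = q/q^*$, giving
\begin{equation}
\rho(q) = \frac{\rho_{cc}(q/\alpha)}{\alpha} = \frac{\rho_{cc}(q^*)}{q/q^*} = \frac{\rho_{cc}(q^*)\, q^*}{q},
\end{equation}
which is the desired formula.

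There is no real obstacle here: the argument is a relabeling that emphasizes $q^*$, the effective measurement ratio inside the active subspace, as a free optimization variable rather than deriving it from a fixed $\alpha$. The only point requiring care is ensuring that $\alpha = q/q^*$ is genuinely constant in the high-dimensional limit (so that Theorem \ref{theorem_limited} applies verbatim) and that $q^* \leq q$ so that $N^* \leq N$. Both conditions are built into the statement via the requirement that $q^*$ is a fixed admissible parameter. All the analytic content — the S-transform manipulations, the resolvent computation, and the evaluation of the largest eigenvalue of $\vb{M}$ — is already packaged inside Theorem \ref{theorem_limited}.
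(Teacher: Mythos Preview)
Your reparameterization is correct and the paper explicitly notes the same equivalence between Theorems~\ref{theorem_limited} and~\ref{theorem1}. The difference is in which direction the implication is used: the paper declares the two theorems equivalent and then gives a direct proof of Theorem~\ref{theorem1}, whereas you take Theorem~\ref{theorem_limited} as input and specialize $\alpha = q/q^*$. Logically both are fine, but as written your argument defers all the content to Theorem~\ref{theorem_limited}, which in the paper has no independent proof.

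One minor misdescription: you say that ``the S-transform manipulations, the resolvent computation, and the evaluation of the largest eigenvalue of $\vb{M}$'' are packaged inside Theorem~\ref{theorem_limited}. In fact the paper's proof of this pair of theorems uses none of that machinery. It is an elementary rotational-invariance argument: write $\vb{H} = \vb{C}^{1/2}\vb{H}_0$ with $\vb{H}_0$ standard Gaussian, conjugate away the orthogonal factor $\vb{O}$, and observe that $\vb{M}$ becomes (up to scale) the classical spectral matrix built from the projected vectors $\vb{P}\vb{b}_k$ and the projected signal $\vb{P}\vb{x}/\|\vb{P}\vb{x}\|$, living in an $N^*$-dimensional space with aspect ratio $N^*/T = q^*$. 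That yields overlap $\rho_{cc}(q^*)$ with $\vb{P}\vb{x}/\|\vb{P}\vb{x}\|$, and since $\|\vb{P}\vb{x}\|^2 = q^*/q$ one gets $\rho(q) = q^*\rho_{cc}(q^*)/q$. The free-probability tools are reserved for the Wishart covariance in Theorem~\ref{generalisation}. This does not affect the validity of your deduction, only the characterization of what lies behind it.
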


We can plot the function $q\mapsto  \rho_{cc}(q)q$ (Figure \ref{rhoq}).

\begin{figure}[h!]
    \centering
    \includegraphics[scale = 1]{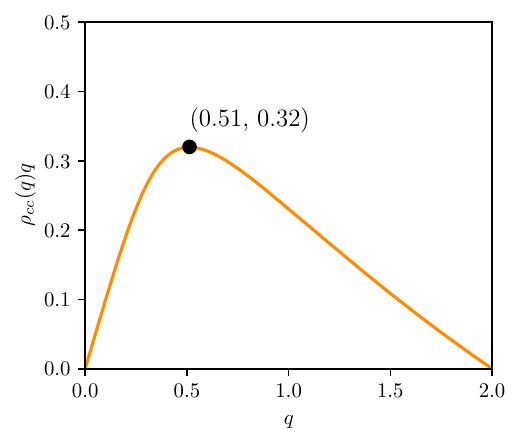}
    \caption{The function $q\mapsto \rho_{cc}(q)q$ with its maximum.}
    \label{rhoq}
\end{figure}

Numerically, we see that the maximum seems to be almost $0.32$ reached for $q = 0.51$. Even with parametric formulas \eqref{formula}, it seems impossible (at least for us) to compute explicitly these numbers. We can thus approximately say that if $q\geq \frac{1}{2} (\approx 0.51)$ it is more interesting to draw the $\vb{a}_k$ vectors in a $\frac{N}{2}$-dimensional space. This method provides a new improved $\rho$ denoted as $\rho_{im}$ which is precisely

\begin{equation}\label{improvedequation}
    \rho_{im}(q) := \frac{\underset{q*\leq q}{\max} \rho_{cc}{(q_c)q_c}}{q}
\end{equation}and plotted in Figure \ref{improved}.

\begin{figure}[h!]
    \centering
    \includegraphics[scale = 1]{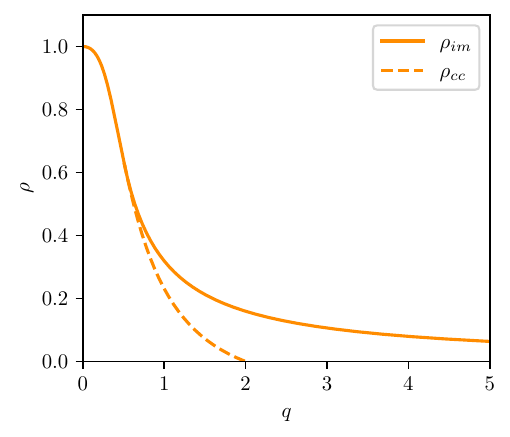}
    \caption{The plot of the $\rho$ improved \eqref{improvedequation}. The dashed line corresponds to the classical case.}
    \label{improved}
\end{figure}

In a way, we were too ambitious, we wanted to explore the entire space while it's more interesting to focus on a limited space and make accurate predictions within it.

\subsection{The theoretical overlap $\rho$ for a Wishart matrix $\vb{C}$}

We have obtained an analytic extension of the formula \eqref{formula} through the following system.

\begin{theorem}\label{generalisation}
We assume that the vectors $\vb{a}_1, ..., \vb{a}_T$ are independent and identically distributed random multivariate Gaussian variables with a covariance matrix $\vb{C}$, randomly drawn from a white-Wishart distribution with $\tau(\vb{C})=1$ and $\tau(\vb{C}^2)=1+p$. $\vb{M}$ still designates the matrix
\begin{eqnarray}
    \vb{M}&=&\frac{1}{T} \sum_{k=1}^T f\left(m_k\right) \vb{a}_k \vb{a}_k^T
\end{eqnarray}with $f:y\mapsto 1 - \frac{1}{y}$ and $m_k = \ga \vb{a}_k, \vb{x} \dr^2$. Then, in the large $N$ and $T$ limit while $q :=\frac{N}{T}$ remains finite, we denote
    \begin{equation}
        \rho(q) := \E\left[\frac{\ga \vb{x}, \vb{y}\dr^2}{||\vb{x}||^2 ||\vb{y}||^2}\right]
    \end{equation}where $\vb{y}$ the eigenvector corresponding to the largest eigenvalue of $\vb{M}$. 
We have 
    \begin{equation}\label{grosysteme}
    \left\{\begin{array}{ccc}
        f(y) &=&1 - \frac{1}{y}\\
        I_1(Z) &=& \int_{-\infty}^{+\infty} \frac{da}{\sqrt{2\pi}} \frac{a^2f^2(a^2)}{Z- f(a^2)}\\
        I_2(Z) &=& \int_{-\infty}^{+\infty} \frac{da}{\sqrt{2\pi}} \frac{f(a^2)}{Z- f(a^2)}\\
        A &=& Z  \\
        B &=& ZI_2(Z)(1+p) - I_1(Z)(1+p)\\
        C &=& -pI_2(Z)I_1(Z) + p Z I_2(Z)^2\\
        q&=& \frac{-B + \sqrt{B^2 - 4 AC}}{2A}\\
        \lambda_1 &=& I_1(Z)\left[1 + p\left(1 + \frac{I_2(Z)}{q}\right)\right]\\
        Z' &=& \frac{1/(1+ p I_2(Z)/q)}{q + I_2(Z) + ZI_2'(Z) + \frac{p\lambda_1 I_2'(Z)}{q(1+p I_2(Z)/q)^2}}\\
        U &=& I_1(Z)\\
        V &=& \frac{p}{\lambda_1}\left[1 + \frac{1}{q}I_2(Z)\right]\\
        U'& =& Z' I_1'(Z)\\
        V' &=& \frac{-p}{\lambda_1^2}\left[1 + \frac{I_2(Z)}{q}\right] + \frac{pZ' I_2'(Z)}{\lambda_1q}\\        
        h' &=&\frac{U'}{1 - UV} + \frac{U(U'V + UV')}{(1-UV)^2}\\
        \rho &=& \frac{1}{1 - h'}.
    \end{array}\right.
\end{equation}
\end{theorem}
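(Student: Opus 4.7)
The plan is to extend the free-probability derivation that yielded \eqref{formula} for $\vb{C}=\vb{1}$ to arbitrary unitarily invariant $\vb{C}$, and then specialize to the Wishart law by injecting its S-transform. Write $\vb{g}_k:=\vb{C}^{-1/2}\vb{a}_k$, which are i.i.d.\ standard Gaussians independent of $\vb{C}$, and let $\vb{M}_0:=\frac{1}{T}\sum_k f(m_k)\vb{g}_k\vb{g}_k^T$, so that $\vb{M}=\vb{C}^{1/2}\vb{M}_0\vb{C}^{1/2}$ has the same non-zero spectrum as $\vb{M}_0\vb{C}$. Because $\tau(\vb{C})=1$, the norm $\|\vb{C}^{1/2}\vb{x}\|$ concentrates at $\|\vb{x}\|$ in the large-$N$ limit, so the weights $f(m_k)=f(\ga\vb{g}_k,\vb{C}^{1/2}\vb{x}\dr^2)$ have the same marginal law as in the classical setting and couple to $\vb{g}_k$ only through the single direction $\vb{C}^{1/2}\vb{x}$.

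First, I would compute the limiting bulk spectrum of $\vb{M}$ via S-transforms. The signal acts on $\vb{M}_0$ as a rank-one perturbation that leaves the bulk unchanged, so $\vb{M}_0$ converges to the classical limit already encoded by $I_1$ and $I_2$ in \eqref{formula}. Freeness of the unitarily invariant $\vb{C}$ with $\vb{M}_0$ then yields $S_{\vb{M}}(\zeta)=S_{\vb{M}_0}(\zeta)S_{\vb{C}}(\zeta)$. Substituting the white-Wishart S-transform, parametrized by the first two moments $\tau(\vb{C})=1$ and $\tau(\vb{C}^2)=1+p$, and eliminating $\zeta$ in favor of the spectral parameter $Z$, produces a quadratic $Aq^2+Bq+C=0$ whose positive root is exactly $q=(-B+\sqrt{B^2-4AC})/(2A)$.

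Next, I would locate the outlier $\lambda_1$ by a BBP-type argument. Evaluating the signal-induced rank-one perturbation inside the free product $\vb{M}_0\vb{C}$ and dressing the bare signal weight $I_1(Z)$ by the Wishart covariance gives $\lambda_1=I_1(Z)[1+p(1+I_2(Z)/q)]$; implicit differentiation of this relation yields $Z'=dZ/d\lambda_1$ in the stated form. For the overlap I would use the standard spectral identity $\rho=\lim_{\lambda\downarrow\lambda_1}(\lambda-\lambda_1)\ga\vb{x},(\lambda\vb{I}-\vb{M})^{-1}\vb{x}\dr/\|\vb{x}\|^2$ and expand the resolvent along the signal mode with a Sherman--Morrison step in both the noise and Wishart sectors. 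This re-sums into a self-energy $h(Z)=U(Z)/(1-U(Z)V(Z))$ with $U=I_1(Z)$ the direct signal-noise contribution and $V=(p/\lambda_1)(1+I_2(Z)/q)$ the Wishart dressing, after which $\rho=1/(1-dh/d\lambda_1)$ unfolds by the chain rule into the displayed expression for $h'$.

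The main obstacle will be the last step: proving that the interaction between the planted rank-one signal and the free product $\vb{M}_0\vb{C}$ organizes exactly into the geometric series $1/(1-UV)$. This requires careful cavity-style bookkeeping to track how the single signal mode is renormalized both by the correlated weights $f(m_k)$ and by the fluctuations of $\vb{C}$ along $\vb{C}^{1/2}\vb{x}$, and to show that no additional cross terms survive at leading order in $N$. A natural consistency check, which I would carry out at the end, is that $p\to 0$ kills $V$, collapses $\lambda_1$ to $I_1(Z)$ and $Z'$ to $Z/(I_1+Z^2 I_2')$, and recovers \eqref{formula} term by term.
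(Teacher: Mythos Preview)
Your high-level scaffolding (S-transform for the bulk, BBP for the outlier, residue for the overlap) is correct, but the route you propose is genuinely different from the paper's and leaves the step you yourself flag as the ``main obstacle'' without a mechanism to resolve it.

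The paper does \emph{not} whiten. It takes $\vb{x}=\vb{e}_1$ by rotational invariance and block-decomposes $\vb{M}$ along $\vb{e}_1$. Conditioning on $d_k:=[\vb{a}_k]_1$, the remaining coordinates are Gaussian with mean $d_k\vb{v}$, where $\vb{v}=\vb{b}/\vb{C}_{11}$ and $\vb{b}$ is the first column of $\vb{C}$ with its top entry removed, and with covariance $\vb{\Sigma}=\hat{\vb{C}}-\vb{b}\vb{b}^T/\vb{C}_{11}$. With $f(y)=1-1/y$ one finds $\vb{M}_{11}\to 0$, $\vb{M}_{21}=\vb{u}:=T^{-1}\sum_k f(d_k^2)d_k\vb{e}_k$, and $\vb{M}_{22}=\vb{A}+\vb{u}\vb{v}^T+\vb{v}\vb{u}^T$ with $\vb{A}=T^{-1}\sum_k f(d_k^2)\vb{e}_k\vb{e}_k^T$. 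Schur complement gives $\lambda_1=h(\lambda_1)$ with $h(z)=\vb{u}^T\vb{G}_{22}(z)\vb{u}$, and two Sherman--Morrison steps on the explicit \emph{rank-two} perturbation $\vb{u}\vb{v}^T+\vb{v}\vb{u}^T$ yield $h=U/(1-UV)$ with $U=\vb{u}^T\vb{G}_{\vb{A}}\vb{u}$, $V=\vb{v}^T\vb{G}_{\vb{A}}\vb{v}$; the cross terms $\vb{u}^T\vb{G}_{\vb{A}}\vb{v}$ vanish by independence. So the geometric series $1/(1-UV)$ is not a mysterious renormalization but pure algebra on a rank-two structure that your whitened decomposition $\vb{M}=\vb{C}^{1/2}\vb{M}_0\vb{C}^{1/2}$ never exposes.

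The Wishart hypothesis enters only when evaluating $V$. For a generic rotationally invariant $\vb{C}$, the vector $\vb{v}$ is correlated with $\vb{\Sigma}$ (hence with $\vb{A}$), and $\vb{v}^T\vb{G}_{\vb{A}}\vb{v}$ is not a trace. The paper invokes a Wishart block lemma: $\vb{b}$ is Gaussian with $\|\vb{b}\|^2=\tau(\vb{C}^2)-1=p$ and is \emph{independent} of $\vb{\Sigma}$, which collapses $V$ to $p\,\tau(\vb{G}_{\vb{A}}(z))=(p/z)(1+I_2(Z)/q)$. In your framework the signal direction inside $\vb{M}_0$ is $\vb{C}^{1/2}\vb{x}$, which is correlated with $\vb{C}$, so the outlier of $\vb{M}_0\vb{C}$ cannot be analyzed by treating signal and covariance as free; there is no visible path from your cavity bookkeeping to the two scalars $U,V$. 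Finally, the quadratic in $q$ and the formula for $Z'$ come from the subordination relation $Z=z\,S_{\vb{C}}(I_2(Z)/q)/(q+I_2(Z))$ (used in computing $U$), specialized to $S_{\vb{C}}(t)=1/(1+pt)$ and differentiated in $z$ at $z=\lambda_1$, rather than from differentiating the $\lambda_1$ expression itself.
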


$\rho(q)$ can then be plotted parametrically by varying $Z$. The strategy is therefore to calculate all of the above quantities given a value of $Z$ and then to plot $(q, \rho)$. The Figure \ref{wishart} compare theory and numerical simulations when $\vb{C}$ is a Wishart matrix of parameter $p$. 

\begin{figure}[h!]
    \centering
    \includegraphics[scale = 1]{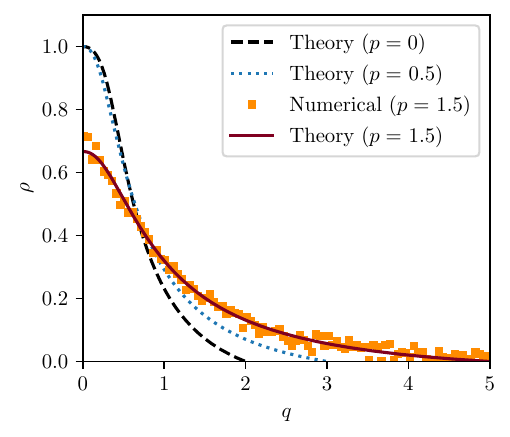}
    \caption{Overlap $\rho = \frac{\ga \vb{x}, \vb{y} \dr^2}{||\vb{x}||^2 ||\vb{y}||^2}$ between the largest eigenvector $\vb{y}$ of $\vb{M}$ and the true signal as a function of $q = \frac{N}{T}$ for the function $f(y) = 1 - 1/y$. Each dot correspond to a single matrix $\vb{M}$ of aspect ratio $q$ and $NT = 10^7$ where the columns were independently drawn from $\mathcal{N}(0, \vb{C})$ where $\vb{C}$ is a Wishart matrix with parameter $p$. The theory curves are given by equations \eqref{grosysteme}. The dashed line corresponds to the classical case when the covariance matrix $\vb{C}$ is the identity studied in the erratum \cite{erratum}.}
    \label{wishart}
\end{figure}

As $q$ approaches 0, which is to say, when the number of measurements is very large, we anticipate that $\mathbf{x}$ will be fully recovered in the explored directions. The spectral method will yield a normalized vector $\mathbf{y}$ that lies in an $N^*$-dimensional space. We denote $\vb{P}$ as the orthogonal projector onto the image space of $\vb{C}$. Given a large $T$ compared to $N$, one might think that 

\begin{equation}
    \frac{\ga \vb{P}\vb{x}, \vb{y}\dr^2}{||\vb{P}\vb{x}||^2} = 1.
\end{equation}

So, the overlap $\rho$ will be

\begin{eqnarray}
    \rho(0) &=& \frac{\ga \vb{x}, \vb{y}\dr^2}{||\vb{x}||^2 ||\vb{y}||^2} = \frac{||\vb{P}\vb{x}||^2}{||\vb{x}||^2} \frac{\ga \vb{P}\vb{x}, \vb{y}\dr^2}{||\vb{P}\vb{x}||^2} \\
    \rho(0)&=& \frac{||\vb{P}\vb{x}||^2}{||\vb{x}||^2}\\
    \rho(0)&=& \frac{\rk(\vb{C})}{N}.\label{q0}
\end{eqnarray}

This is a general remark that does not depend specifically on whether $\vb{C}$ is a Wishart matrix. Let's see now what \eqref{q0} actually says when $\vb{C}$ is a Wishart matrix of a parameter $p$. 

A Wishart matrix of size $N$ with $M$ observations has rank $\min(N,M)$, since $p=N/M$ we have

\begin{equation}
    \rho(0) = \left\{
    \begin{array}{cc}
        1 & \text{ if $p\leq 1$} \\
       \frac{1}{p}  & \text{ if $p\geq 1$} 
    \end{array}
    \right.
\end{equation}.

This is exactly what we observe in Figure \ref{wishart}.

When $q\gtrsim 1$, using a Wishart matrix $\vb{C}$ significantly enhances the performance of the spectral method. We will compute $q_c$, which is the point at which $\rho = 0$, in the next subsection. This value appears to be greater than $2$, which corresponds to the classical case.

\subsection{Around the phase transition}

Let's define the threshold value $q_c$ by the first $q$ such that $\rho(q) = 0$. We know that in the classical case the threshold value is $q_c = 2$. What happens to this value $q_c$ for covariance matrices that deviate from the identity matrix? The following theorems answer this question for the two covariance matrices presented earlier.

\begin{theorem}\label{universal}
    The model where $\vb{C} = \vb{P}_{\alpha}$ presented in \ref{matrixlimited} exhibits a transition at 

    \begin{equation}
        q_c(\vb{P}_{\alpha}) = 2\alpha,
    \end{equation}while the one where $\vb{C}=\vb{W}_p$ is a Wishart matrix with parameter $p$ has a threshold value equal to 

    \begin{equation}
        q_c(\vb{W}_p) = 2(1+p). 
    \end{equation}
    Note that in both cases $q_c(\vb{C}) =2\tau(\vb{C}^2)$.

\end{theorem}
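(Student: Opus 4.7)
The plan is to treat the two covariance models separately, since each has a different structural origin for the transition, and then to observe that both thresholds can be rewritten as $2\tau(\vb{C}^2)$.

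For the projector model $\vb{C}=\vb{P}_\alpha$, I would simply invoke Theorem~\ref{theorem_limited}, which yields the closed-form identity $\rho(q) = \rho_{cc}(q/\alpha)/\alpha$. Since $\rho_{cc}$ vanishes precisely at the classical threshold $q=2$ and is strictly positive for $q<2$, the transition of $\rho$ occurs exactly when $q/\alpha = 2$, i.e.\ at $q_c(\vb{P}_\alpha)=2\alpha$. Combined with the elementary calculation $\tau(\vb{P}_\alpha^2)=\alpha$, this gives $q_c=2\tau(\vb{C}^2)$.

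For the Wishart case the plan is to evaluate the parametric system \eqref{grosysteme} at the edge value $Z=Z^{*}=1$, which is the boundary of the admissible range since $f(y)=1-1/y$ has supremum $1$ and the denominators $Z-f(a^2)$ cease to be uniformly positive below it. The key computation is to show that with $f(y)=1-1/y$ and $a\sim\mathcal{N}(0,1)$ one has $Z-f(a^2)=1/a^2$, so that
\begin{align}
I_2(1) &= \E\!\left[(1-1/a^2)\,a^2\right]=\E[a^2-1]=0,\\
I_1(1) &= \E\!\left[a^2(1-1/a^2)^2 a^2\right]=\E[(a^2-1)^2]=2.
\end{align}
Plugging $Z=1$ into the coefficients of the quadratic in $q$ gives $A=1$, $B=-2(1+p)$, and $C=pI_2(1)[I_2(1)-I_1(1)]=0$, so the quadratic collapses to $q_c = -B/A = 2(1+p)$. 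Since $\tau(\vb{W}_p^2)=1+p$, this is again $2\tau(\vb{C}^2)$.

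To finish rigorously I need to verify that $\rho$ really vanishes at $Z=1$ (not merely that this $Z$ is an endpoint). Using $\rho=1/(1-h')$, it suffices to show $h'\to -\infty$ as $Z\downarrow 1$. This should follow from the $Z'$ equation: the denominator of $Z'$ is $q+I_2(Z)+ZI_2'(Z)+p\lambda_1 I_2'(Z)/(q(1+pI_2(Z)/q)^2)$, and a direct substitution at $Z=1$, $q=2(1+p)$, $\lambda_1=2(1+p)$, $I_2'(1)=-2$ yields $2(1+p)-2-2p=0$. Consequently $Z'$ diverges, which forces $U'=Z'I_1'(Z)$ and the second term of $V'$ to diverge, driving $h'\to -\infty$ and $\rho\to 0$. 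The main obstacle here is bookkeeping: I would need to check that the divergence of $Z'$ is of a definite sign (using $I_1'(1)=-10$ and $I_2'(1)=-2$) so that $h'$ indeed tends to $-\infty$ and not $+\infty$, and that no cancellation spoils the limit. Once this is done, continuity of $\rho$ in $q$ together with its positivity for $Z>1$ implies that $q_c(\vb{W}_p)=2(1+p)$ is the first zero, completing the proof.
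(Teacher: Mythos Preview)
Your proposal is correct and follows essentially the same route as the paper's own proof: for $\vb{P}_\alpha$ you invoke Theorem~\ref{theorem_limited} and the classical threshold $q_{cc}=2$; for the Wishart case you evaluate the parametric system \eqref{grosysteme} at $Z=1$, obtain $I_1(1)=2$, $I_2(1)=0$ (which the paper packages as Lemma~\ref{integrals}), read off $q=2(1+p)$ from the collapsed quadratic, and then check that the denominator of $Z'$ vanishes there so that $\rho\to 0$. Your extra remark about verifying the sign of the divergence of $Z'$ (and hence of $h'$) is a point the paper itself leaves implicit.
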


As explained in Section \ref{SectionA}, when we have limited measurements, $\tau(\vb{C}^2)$ drives the value of $\rho$. Thus, in this regime, the $\vb{P}_{\alpha}$ model may be a good candidate to predict the theoretical value of $\rho$ for a normalized covariance matrix with a second-order moment of $\alpha$. In Figure \ref{comparaison}, we test this conjecture in the case of the Wishart model, using the theoretical parametric approach established in the previous section, as well as numerical results for an Inverse Wishart matrix with parameter $p$, whose second moment is also $\tau(\vb{\invW}_p)=1+p$. The behavior of $\rho(q)$ for $q\gtrsim 1$ is similar for all three models sharing the same second moment. We conjecture that $q_c = 2\tau(\vb{C}^2)$ is actually a universal result.
\begin{figure}[h!]
    \centering
    \includegraphics[scale = 1]{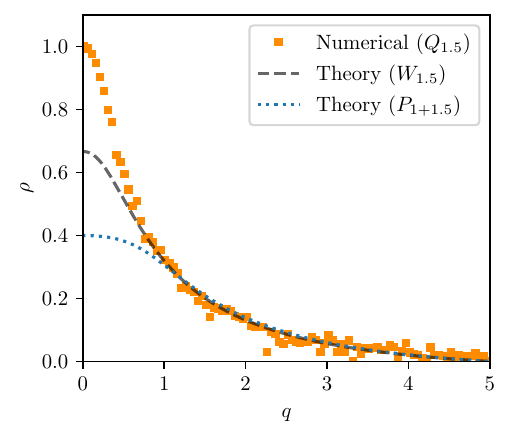}
    \caption{Overlap $\rho = \frac{\ga \vb{x}, \vb{y} \dr^2}{||\vb{x}||^2 ||\vb{y}||^2}$ between the largest eigenvector $\vb{y}$ of $\vb{M}$ and the true signal as a function of $q = \frac{N}{T}$ for the function $f(y) = 1 - 1/y$. Each dot correspond to a single matrix $\vb{M}$ of aspect ratio $q$ and $NT = 10^7$ where the columns were independently drawn from $\mathcal{N}(0, \vb{C})$ where $\vb{C} = \vb{\invW}_{p = 1.5}$ is an inverse-Wishart matrix with parameter $p = 1.5$. The theory curves are given by theorem \ref{theorem_limited} for the model $\vb{P}_{1 + 1.5}$ and theorem \ref{generalisation} if $\vb{C}$ is a Wishart matrix of parameter $p = 1.5$.}
    \label{comparaison}
\end{figure}

\section{Proof of results}
\label{section:proofs}

\subsection{Proof of theorem \ref{orthogonalcase}}

\begin{proof}
We recall that
\begin{equation}
    \vb{M} = \frac{1}{T} \sum_{k=1}^T f\left(\ga \vb{a}_k, x \dr^2\right) \vb{a}_k \vb{a}_k^T.
\end{equation}
Without loss of generality, we can assume the true vector $\vb{x}$ is in the canonical direction $\vb{e}_1$.
The vectors $\vb{a}_k$ form a linearly independent orthogonal family and $\vb{M}$ is thus a diagonal matrix in the basis of the $\vb{a}_k$. Furthermore, the largest eigenvalue of $\vb{M}$ is $f(y_m)$ with $y_m = \ga \vb{a}_m, \vb{e}_1\dr^2 = [\vb{a}_m]_1^2$ as large as possible since $f$ is an increasing function. Therefore, the overlap between the corresponding eigenvector and $\vb{x} = \vb{e}_1$ is 

\begin{eqnarray}
    \rho(q) &=& E\left(\ga \vb{e}_1, \vb{a}_m\dr^2\right)\\
    &=& \E\left([\vb{a}_m]_1^2\right)\\
    &=&\E\left(\underset{1\leq k\leq T}{\max} [\vb{a}_k]_1^2\right)\\
\end{eqnarray}

There are a lot of results about the law of the maximum of random variables with a weak correlation. But, here, the variables are supposed to be orthogonal and so strongly correlated. Since $\vb{O}$ and $\vb{O}^T$ share the same distribution, the distribution of the maximum of the $T$ first squared elements in the first row of $\vb{O}$ is identical to that of the maximum of the $T$ first squared elements in the first column of $\vb{O}$. Moreover, the first column of a randomly generated orthogonal matrix follows a uniform distribution on the $(N-1)-$sphere. Finally, we get

\begin{equation}
    \rho = \E\left[\underset{1\leq k\leq T}{\max} \frac{\vb{b}_k^2}{||\vb{b}||^2}\right] = \frac{1}{N} \E \left[\underset{1\leq k\leq T}{\max} [\vb{b}_k]_1^2\right].
\end{equation}where $(\vb{b}_k)_{1\leq k\leq T}$ are IID Gaussian variables. Indeed, when $N$ gets large, $||\vb{b}||^2$ is very close to its expected value which is $N$. In addition, it is a well-known fact that

\begin{equation}
    \underset{1\leq k\leq T}{\max} [\vb{b}_k]_1 \underset{T\to \infty}{\sim} \sqrt{2 \log(T)}
\end{equation}
with small fluctuations.

Hence, we have

\begin{equation}
    \rho = \frac{2 \log(T)}{N}
\end{equation}
which is equivalent to

\begin{equation}
    \frac{2\log(N)}{N}
\end{equation}
when $q = \frac{N}{T}$ remains finite while $N, T\longrightarrow +\infty$.

So, 

\begin{equation}
    \rho(q) \underset{N\to \infty}{\longrightarrow} 0.
\end{equation}
and hence the result.
\end{proof}

\subsection{Proof of theorem \ref{theorem1}}
Theorems \ref{theorem_limited} and \ref{theorem1} are equivalent and we will prove the second one.

\begin{proof}    
Notice that $\vb{C} = \frac{q}{q^*}\vb{P}$ where $\vb{P}$ is the normalized projection on the $N^*$-dimensional space spanned by the $N^*$ first columns of $\vb{O}$. We can assume that 

\begin{equation}
    \vb{H} = \vb{C}^{1/2} \vb{H}_0
\end{equation}
where all the coefficients of $\vb{H}_0$ are IID normalized Gaussian random variables. We denote $\vb{b}_k$ the columns of $\vb{H}_0$.
Thus, we have

\begin{equation}
    \vb{M} = \frac{1}{T}\sum_{k=1}^T f\left(\ga \vb{C}^{1/2} \vb{b}_k, x\dr^2\right) \vb{C}^{1/2}\vb{b}_k \vb{b}_k^T \vb{C}^{1/2}
\end{equation}

\begin{multline}
    =\frac{1}{T}\sum_{k=1}^T f\left(\ga \vb{O}^T\vb{D}^{1/2} \vb{O} \vb{b}_k, \vb{x}\dr^2\right) \\\vb{O}^T \vb{D}^{1/2}\vb{O}\vb{b}_k \vb{b}_k^T \vb{O}^T \vb{D}^{1/2} \vb{O}
\end{multline}

\begin{multline}
    =\frac{1}{T}\sum_{k=1}^T f\left(\ga \vb{D}^{1/2} (\vb{O} \vb{b}_k), \vb{O}\vb{x}\dr^2\right)\\ \vb{O}^T \vb{D}^{1/2}(\vb{O}\vb{b}_k) (\vb{O}\vb{b}_k)^T \vb{D}^{1/2} \vb{O}
\end{multline}

$\vb{O}\vb{b}_k$ and $\vb{b}_k$ follow the same distribution and replacing $\vb{O}\vb{b}_k$ by $\vb{b}_k$ will not change the final overlap $\rho$. We can therefore assume

\begin{equation}
     \vb{M} = \frac{1}{T}\sum_{k=1}^T f\left(\ga \vb{D}^{1/2} \vb{b}_k, \vb{O}\vb{x}\dr^2\right) \vb{O}^T (\vb{D}^{1/2}\vb{b}_k) (\vb{D}^{1/2}\vb{b}_k)^T  \vb{O}.
\end{equation}

One can check that if $\vb{y}$ is the eigenvector corresponding to the largest eigenvalue of a matrix $\vb{A}$, then $\vb{O}^T \vb{y}$ is the eigenvector corresponding to the largest eigenvalue of the matrix $\vb{O}^T \vb{A}\vb{O}$.

We denote $\vb{y}$ the eigenvector corresponding to the largest eigenvalue of $\vb{M}$ and we study the quantity $\ga \vb{x}, \vb{y}\dr^2$. We can rewrite it as $\ga \vb{O}\vb{x}, \vb{O}\vb{y}\dr^2$. So we can assume that

\begin{equation}\label{eq2}
     \vb{M} = \frac{1}{T}\sum_{k=1}^T f\left(\ga \vb{D}^{1/2} \vb{b}_k, \vb{x}\dr^2\right) (\vb{D}^{1/2}\vb{b}_k) (\vb{D}^{1/2}\vb{b}_k)^T.
\end{equation}
and we study the overlap between the largest eigenvector of such a matrix and $\vb{x}$.

Let's take $\vb{D} = \frac{q}{q^*}\vb{P}$ in order to obtain

    \begin{equation}\label{eq2}
     \vb{M} \propto \frac{1}{T}\sum_{k=1}^T f\left(\ga \vb
     P\vb{b}_k, \frac{\vb{P}\vb{x}}{||\vb
     P\vb{x}||}\dr^2\right) \vb{P}\vb{b}_k (\vb{P}\vb{b}_k)^T
\end{equation}
since $||\vb{P}\vb{x}||^2 = \frac{q^*}{q}$.

This matrix will give a vector $\vb{y}$ such that

\begin{equation}
    \E\left[\ga \frac{\vb{P}\vb{x}}{||\vb{P}\vb{x}||}, \vb{y}\dr^2\right] = \rho_{cc}(q^*)
\end{equation}
and therefore

\begin{equation}
    \E\left[\ga \vb{x}, \vb{y}\dr^2\right] = \frac{q^* \rho_{cc}(q^*)}{q}.
\end{equation}
    
\end{proof}

\subsection{Proof of theorem \ref{generalisation}}

\begin{proof}

The problem is invariant by rotation. Without loss of generality, we can assume true vector $\vb{x}$ is in the canonical direction $\vb{e}_1$. So, $m_k$ are just $[\vb{a}_k]_1^2$ with $k=1, ..., T$. The core concept of the proof is to consider the conditional probabilities given the quantities $[\vb{a}_k]_1$ which play a role different from the other coordinates since they are in the argument of $f$ in the expression of $\vb{M}$.

We write $\vb{C}$ as
\begin{equation}
    \begin{pmatrix}
        \vb{C}_{11} & \vb{b}^T\\
        \vb{b} & \vb{\hat{C}}
    \end{pmatrix}
\end{equation}
with $\vb{C}_{11}\in \R, \vb{b}\in \R^{N-1}$ and $\vb{\hat{C}}$ is an $(N-1)\times (N-1)$ matrix.

Let $1\leq k\leq T$. Given $[\vb{a}_k]_1$, the vector $\vb{\hat{a}}_k\in \R^{N-1}$ which is $\vb{a}_k$ without the first coordinate, is also a Gaussian vector defined by

\begin{equation}
    \left\{
    \begin{array}{lll}
        \E(\vb{\hat{a}}_k | [\vb{a}_k]_1) &=& \frac{\vb{b}}{\vb{C}_{11}} [\vb{a}_k]_1\\
         \text{Cov}(\vb{\hat{a}}_k | [\vb{a}_k]_1) &=& \vb{\Sigma} :=  \hat{C} - \frac{\vb{b} \vb{b}^T}{C_{1 1}}\\
    \end{array}
    \right.
\end{equation}
where $\Sigma$ is the $(N-1)\times (N-1)$ matrix. This is an elementary result using the Schur Complement formula.

We can write $\vb{\hat{a}}_k$ in a most satisfying way:

\begin{equation}
    \vb{\hat{a}}_k = \vb{v} [\vb{a}_k]_1 + \vb{e}_k
\end{equation}
with 
\begin{equation}
    \vb{v} = \frac{\vb{b}}{\vb{C}_{11}}
\end{equation}
and $\vb{e}_k$ a $(N-1)$ vector drawn from $\mathcal{N}(0, \vb{\Sigma})$.

Let's put all the $\vb{e}_k$ in a $(N-1)\times T$ matrix called $\vb{E}$:
\begin{equation}
    \vb{E} := (\vb{e}_k)_{1\leq k\leq T}.
\end{equation}

We now denote the $[\vb{a}_k]_1$ by $d_k$. We insist that the above operation allows us to consider the $d_k$ independent from the other variables. We recall that

\begin{equation}
    \vb{M} = \frac{1}{T} \sum_{k=1}^T f(d_k^2) \vb{a}_k \vb{a}_k^T.
\end{equation}
We use the same block decomposition as $C$'s one, 

\begin{equation}
    \begin{pmatrix}
        \vb{M}_{11} & \vb{M}_{21}^T\\
        \vb{M}_{21} & \vb{M}_{22}
    \end{pmatrix}
\end{equation}
where $\vb{M}_{21}\in \R^{N-1}$ and $\vb{M}_{22}$ is a $(N-1)\times (N-1)$ matrix.

We easily see that

\begin{equation}\label{3M2}
    \left\{
    \begin{array}{l}
        \vb{M}_{11} = \frac{1}{T} \sum_{k=1}^T f(d_k^2) d_k^2\\
        \vb{M}_{21} = \frac{1}{T} \sum_{k=1}^T f(d_k^2) d_k (d_k \vb{v} + \vb{e}_k)\\
        \vb{M}_{22} = \frac{1}{T} \sum_{k=1}^T f(d_k^2) (d_k \vb{v} + \vb{e}_k) (d_k \vb{v} + \vb{e}_k)^T.
    \end{array}
    \right.
\end{equation}

Let's recall that $f$ is defined by $y\mapsto 1 - \frac{1}{y}$, and so

\begin{equation}
    \vb{M}_{11} = \frac{1}{T} \sum_{k=1}^T (d_k^2 -1).
\end{equation}

The $d_k$ are independent, so the law of large numbers tells us that

\begin{equation}
    \vb{M}_{11} \underset{T\to +\infty}{\approx} \E(d_k^2) - 1 = \vb{C}_{11} - 1.
\end{equation}

Furthermore, when $T$ gets large, $\vb{C}_{11}$ is very close to its expected value which is $\tau(\vb{C}) = 1$. \eqref{3M2} becomes

\begin{equation}\label{3M}
    \left\{
    \begin{array}{l}
        \vb{M}_{11} = 0\\
        \vb{M}_{21} = \vb{u}\\
        \vb{M}_{22} = \vb{A} + \vb{u} \vb{v}^T + \vb{v}\vb{u}^T
    \end{array}\right.
\end{equation}
with

\begin{equation}
    \vb{u} := \frac{1}{T} \sum_{k=1}^T f(d_k^2) d_k \vb{e}_k.
\end{equation}
and 

\begin{equation}
    \vb{A} := \frac{1}{T} \sum_{k=1}^T f(d_k^2) \vb{e}_k \vb{e}_k^T.
\end{equation}

Now, we look for the zeros of the Stieljes transform $g_{\vb{M}}(z) = \tau((z\vb{1} - \vb{M})^{-1})$ in order to find an outlier. Using  Schur complement and Sherman-Morisson formulas, we get

\begin{equation}
    N \g_{\vb{M}}(z) = \Tr(\vb{G}_{22}(z)) + \frac{1 + \Tr(\vb{G}_{22}(z)\vb{M}_{21}\vb{M_{12}} \vb{G}_{22}(z))}{z - \vb{M}_{11} - \vb{M_{12}}\vb{G}_{22}(z)\vb{M}_{21}}
\end{equation}
where $\vb{G}_{22}(z)$ is the matrix resolvent of $\vb{M}_{22}$. As $\vb{M}_{22}$ is a rotationally invariant matrix, we expect it to have a continuous spectrum with an edge $\lambda_{+}$. $\lambda$ is an outlier of $\vb{M}$ if it is a pole of $\g_{\vb{M}}(z)$. In other words, $\lambda_1$ satisfied the fixed point equation:
\begin{equation}
    \lambda_1 = h(\lambda_1)
\end{equation}
where

\begin{equation}
    h(z) = \vb{M}_{11} - \vb{M_{12}}\vb{G}_{22}(z)\vb{M}_{21} = \vb{u}^T \vb{G}_{22}(z) \vb{u}
\end{equation}which is assumed to be self-averaging.

The overlap $\rho$ between the corresponding eigenvector $\vb{y}$ and $\vb{x}$ is given by the residues

\begin{equation}
    \rho = \frac{\ga \vb{y}, \vb{x}\dr^2}{||\vb{x}||^2} = \underset{z\to \lambda_1}{\text{lim}} \frac{z-\lambda_1}{z - h(z)} = \frac{1}{1 - h'(\lambda_1)}
\end{equation}
using l'Hospital's rule.

Since $\vb{M}_{22}$ is a rank-$2$ perturbation of $\vb{A}$, we can apply Sherman-Morrison formula twice to express $\vb{G}_{22}$ given $\vb{G}_{\vb{A}}$ the resolvent matrix of $\vb{A}$. Eventually, we get

\begin{multline}
    h(z) = \vb{u}^T\vb{G}_{\vb{A}}\vb{u} - \vb{u}^T\vb{G}_{\vb{A}} \vb{v} \vb{u}^T \vb{G}_{\vb{A}}\vb{u} \\- \frac{\vb{u}^T(\vb{G}_{\vb{A}} - \vb{G}_{\vb{A}} \vb{v} \vb{u}^T \vb{G}_{\vb{A}}) \vb{u} \vb{v}^T (\vb{G}_{\vb{A}} - \vb{G}_{\vb{A}} \vb{v} \vb{u}^T \vb{G}_{\vb{A}})\vb{u}}{1 - \vb{v}^T \vb{G}_{\vb{A}} \vb{v} \vb{u}^T \vb{G}_{\vb{A}}\vb{u}}.
\end{multline}
As $\vb{u}$ and $\vb{v}$ are zero-mean random variables, we expect $\vb{u}^T \vb{G}_{\vb{A}} \vb{v}$ or $\vb{v}^T \vb{G}_{\vb{A}} \vb{u}$ to vanish. By using such an approximation, we get

\begin{equation}\label{h}
    h(z) = U + \frac{U^2 V}{1 -UV} = \frac{U}{1-UV}
\end{equation}
with $U$ and $V$ being self-averaging quantities defined by $U := \vb{u}^T \vb{G}_{\vb{A}}\vb{u}$ et $V := \vb{v}^T \vb{G}_{\vb{A}}\vb{v}$. 

Now, we should proceed to compute $U$ and $V$.

We have

\begin{align}
    U&=\vb{u}^T \vb{G}_{\vb{A}}\vb{u}\\
    &=\frac{1}{T^2}\sum_{k, l=1}^T f(d_k^2)f(d_l^2) d_k d_l \sum_{i, j =1}^{N-1} [\vb{a}_k]_i [\vb{G}_{\vb{A}}(z)]_{i, j} [\vb{a}_l]_j\\
    &=\frac{1}{T^2}\sum_{k=1}^T f^2(d_k^2)d_k^2 \sum_{i, j =1}^{N-1} [\vb{a}_k]_i [\vb{G}_{\vb{A}}(z)]_{i, j} [\vb{a}_l]_j
\end{align}
since $U$ is self-averaging and the two random variables $\vb{A}$ and $\vb{D}_0$ are independent.

This leads to the equation 
\begin{equation}
    U = q\tau\left(\frac{\vb{E} \vb{D} \vb{E}^T}{T} \left(z\vb{1} - \frac{\vb{E} \vb{D}_0 \vb{E}^T}{T}\right)^{-1} \right)
\end{equation}
where $[\vb{D}]_{kl} =f^2(d_k^2)d_k^2 \delta_{kl}$.

We use the same manipulations as in the erratum \cite{erratum}:

\begin{align}
    U &=q\tau\left(\frac{\vb{E} \vb{D} \vb{E}^T}{T} \left(z\vb{1} - \frac{\vb{E} \vb{D}_0 \vb{E}^T}{T}\right)^{-1} \right)\\
    &= \frac{1}{T} \Tr\left[\frac{\vb{E} \vb{D} \vb{E}^T}{T} \sum_{n=0}^{+\infty} \frac{1}{z^{n+1}} \left(\frac{\vb{E} \vb{D}_0 \vb{E}^T}{T}\right)^n \right]\\
    &= \frac{1}{T} \Tr\left[\vb{D} \vb{D_0}^{-1/2} \sum_{n=0}^{+\infty} \frac{1}{z^{n+1}}\left(\vb{D_0}^{1/2} \frac{\vb{E}^T \vb{E}}{T}\vb{D_0}^{1/2}\right)^{n+1}\vb{D_0}^{-1/2}\right]\\
   & =\frac{1}{T} \Tr\left[\vb{D}\vb{D_0}^{-1/2} \vb{T}_{\vb{W}}(z)\vb{D_0}^{-1/2}\right]
\end{align}
where $\vb{T}_{\vb{W}}$ is the $\vb{T}$-matrix of 

\begin{equation}
    \vb{W}:= \vb{D_0}^{1/2} \frac{\vb{E}^T \vb{E}}{T}\vb{D_0}^{1/2} = \vb{X}^{1/2} \vb{Y}\vb{X}^{1/2}
\end{equation}
with $\vb{X} = q\vb{D_0}$ and $\vb{Y} = \frac{\vb{E}^T\vb{E}}{N}$.

Since $\vb{X}$ and $\vb{Y}$ are free, we can use in the large $N$ limit the following subordination relation, as recalled in the Appendix \ref{app:appendixRMT}, which contains the $S$-transform of $\vb{Y}$:

\begin{eqnarray}
    \E[\vb{T}_{\vb{W}}(z)]_{\vb{Y}} &=& \vb{X}\left[z S_{\vb{Y}}(\mathfrak{t}_{\vb{W}}(z))\vb{1} - \vb{X}\right]^{-1}
\end{eqnarray}

with

\begin{equation}
    S_{\vb{Y}}(t) = \frac{S_{\vb{C}}\left(\frac{t}{q}\right)}{1 + \frac{t}{q}}
\end{equation}

according to Lemma \ref{lemme2} in Appendix \ref{app:lemma}.

Each $d_k$ is a normal variable with mean of $0$ and a standard deviation of $\tau(\vb{C}) = 1$. Therefore we can now put everything together to obtain

\begin{equation}\label{U}
    U = I_1(Z)
\end{equation}
with

\begin{equation}
    I_1(Z) = \int_{-\infty}^{+\infty} \frac{x^2 f^2(x^2)}{Z - f(x^2)}\frac{dx}{\sqrt{2\pi}},
\end{equation}

\begin{equation}\label{Zz}
    Z = z\frac{S_{\vb{C}}\left(\frac{I_2(Z)}{q}\right)}{q + I_2(Z)}
\end{equation}
and

\begin{equation}
    I_2(Z) = \int_{-\infty}^{+\infty} \frac{f(x^2)}{Z - f(x^2)}\frac{dx}{\sqrt{2\pi}}.
\end{equation}

Previously, we have assumed that $\vb{C}_{11} = 1$. So,

\begin{equation}\label{V2}
    V = \tau\left[\vb{b}\vb{b}^T \left(z\vb{1} - \frac{\vb{E}\vb{D_0}\vb{E}^T}{T}\right)^{-1}\right].
\end{equation}

So far, our calculations have been quite general and applicable to any rotation-invariant covariance matrix. Now, we will truly make use of the fact that $\vb{C}$ is a Wishart matrix, as we do not know how to calculate such a quantity $V$ \eqref{V2} in a general context.

Lemma \ref{lemme1} in Appendix \ref{app:lemma} yields

\begin{equation}
    V = p\tau\left[\left(z\vb{1} - \frac{\vb{E}\vb{D_0}\vb{E}^T}{T}\right)^{-1}\right]
\end{equation}with $p = \tau(\vb{C}^2) -1$.

We now make $\vb{T}_{\vb{W}}$ appear in such a way as to be able to use the same previous manipulations:

\begin{align}
     V &= p \frac{\Tr}{N}\left[\left(z\vb{1} - \frac{\vb{E}\vb{D}_0\vb{E}^T}{T}\right)^{-1}\right]\\
    &= \frac{p}{z} \frac{\Tr}{N}\left[\left(z\vb{1} - \frac{\vb{E}\vb{D}_0\vb{E}^T}{T} + \frac{\vb{E}\vb{D}_0\vb{E}^T}{T}\right)\left(z\vb{1} - \frac{\vb{E}\vb{D}_0\vb{E}^T}{T}\right)^{-1}\right]\\
    &= \frac{p}{z}\left(1 + \frac{\Tr}{N}\left[\frac{\vb{E}\vb{D}_0\vb{E}^T}{T}\left(z1 - \frac{\vb{E}\vb{D}_0\vb{E}^T}{T}\right)^{-1}\right]\right)\\
    &= \frac{p}{z}\left(1 + \frac{\Tr}{N}\vb{T}_{\vb{W}}(z)\right)\\
    &= \frac{p}{z}\left(1 + \frac{\mathfrak{t}_{\vb{W}}(z)}{q}\right)\\
    &= \frac{p}{z}\left(1 + \frac{I_2(Z)}{q}\right)\label{V}.
\end{align}

The variable $Z$ seems to be a useful new variable, we would like to express $\lambda_1$ in terms of $Z$. We know that $h(\lambda_1) =\lambda_1$, so we have

\begin{equation}
    \lambda_1= \frac{U}{1 - U V}
\end{equation}by using \ref{h}.

The use of \eqref{U} and \eqref{V} at $z = \lambda_1$ yields 

\begin{eqnarray}
    \lambda_1&=& \frac{U}{1 - U V}\\
    \lambda_1 (1 - UV) &=& U\\
    \lambda_1\left(1 - I_1(Z) \frac{p}{\lambda_1} \left(1 + \frac{I_2(Z)}{q}\right)\right)& = &I_1(Z)   
\end{eqnarray}
and so,

\begin{equation}\label{equ}
    \lambda_1 = I_1(Z)\left[1 + p\left(1 + \frac{I_2(Z)}{q}\right)\right].
\end{equation}

Now, we would like to find an expression of $q$ as a function of $Z$. We know

\begin{equation}
    Z = \lambda_1 \frac{S_{\vb{C}}\left(\frac{I_2(Z)}{q}\right)}{q + I_2(Z)}
\end{equation} which gives us the equation

\begin{equation}\label{Z}
    Z = I_1(Z)\left(1 + p\left(1 + \frac{I_2(Z)}{q}\right)\right) \frac{S_{\vb{C}}\left(\frac{I_2(Z)}{q}\right)}{q + I_2(Z)}
\end{equation}by using \eqref{equ}.

We recall that the $S$-transform of a Wishart matrix $\vb{W}_p$ with $p$ as the parameter is given by

\begin{equation}
S_{\vb{W}_p}(t) = \frac{1}{1 + pt}
\end{equation}(see a proof in \cite{potters2020first}). 

Eventually, \ref{Z} gives a quadratic equation for $q$:
\begin{equation}
    A q^2 + B q + C = 0
\end{equation}
with

\begin{equation}\label{system}
    \left\{\begin{array}{ccc}
        A &=& Z  \\
        B &=& ZI_2(Z)(1+p) - I_1(Z)(1+p)\\
        C &=& -pI_2(Z)I_1(Z) + p Z I_2(Z)^2
    \end{array}\right..
\end{equation}

Then, let's find out an expression of $Z'$ in terms of $q$ and $Z$. By using \ref{Zz}, we start from 

\begin{equation}\label{q}
    Z(q + I_2(Z)) = zS_{\vb{C}}\left(\frac{I_2(Z)}{q}\right)
\end{equation}which can be differentiated with respect to $z$:

\begin{multline}
    Z'(q + I_2(Z)) + Z Z'I_2'(Z) = \\S_{\vb{C}}\left(\frac{I_2(Z)}{q}\right) + zZ' \frac{I_2'(Z)}{q} S_{\vb{C}}'\left(\frac{I_2(Z)}{q}\right)
\end{multline}
and as result, we get

\begin{equation}\label{ZZ}
    Z' = \frac{S_{\vb{C}}\left(\frac{I_2(Z)}{q}\right)}{q + I_2(Z) + Z I_2'(Z) - \lambda_1  \frac{I_2'(Z)}{q} S_{\vb{C}}'\left(\frac{I_2(Z)}{q}\right)}
\end{equation}
at the point $z = \lambda_1$.

Since, $S_{\vb{C}}(t) = \frac{1}{1 + pt}$, we get

\begin{equation}
    Z' = \frac{1/(1+ p I_2(Z)/q)}{q + I_2(Z) + ZI_2'(Z) + \frac{p\lambda_1 I_2'(Z)}{q(1+p I_2(Z)/q)^2}}.
\end{equation}
By differentiating \eqref{h}, we have

\begin{equation}
    h'(\lambda_1) = \frac{U'}{1 - UV} + \frac{U(U'V + UV')}{(1-UV)^2}
\end{equation}
with

\begin{equation}
    U' = Z' I_1'(Z)
\end{equation}
and

\begin{equation}
    V' = \frac{-p}{\lambda_1^2}\left[1 + \frac{I_2(Z)}{q}\right] + \frac{pZ' I_2'(Z)}{\lambda_1q}.
\end{equation}.

We have obtained all the equations of the system \eqref{grosysteme}.

\end{proof}
\subsection{Proof of theorem \ref{universal}}

By using theorem \ref{theorem_limited}, it is easy to see that

\begin{equation}
    q_c(\vb{P}_{\alpha}) = q_{cc} \alpha = 2\alpha
\end{equation}where $q_{cc} = 2$ is the threshold value in the classical case.

We will now compute $q_c$ in the Wishart case using the system \ref{grosysteme}. The integrals $I_1$ and $I_2$ are not well-defined for $Z\leq Z^* = 1$. What is the value of $q$ corresponding to $Z^*=1$? By using Lemma \ref{integrals}, we can find that 

\begin{equation}
    \left\{
    \begin{array}{ccc}
        A(Z = 1) &=&1  \\
        B(Z=1) & = &- 2(1+p)\\
        C(Z = 1) & = & 0
    \end{array}\right.
\end{equation}

and so we get

\begin{equation}\label{qZ}
    q(Z = 1) = \frac{-B}{A} = 2(1+p).
\end{equation}

Furthermore, using Lemma \ref{integrals}, we can easily see that

\begin{equation}\label{lZ}
    \lambda_1(Z = 1) = 2(1+p)
\end{equation}

If one looks the system \ref{grosysteme}, we see that the only way that $\rho = 0$ is that $Z' = -\infty$ and so

\begin{equation}\label{equationnulle}
    D(Z):=q + I_2(Z) + ZI_2'(Z) + \frac{p\lambda_1 I_2'(Z)}{q(1+p I_2(Z)/q)^2} = 0.
\end{equation}

By using \eqref{qZ}, \eqref{lZ} and Lemma \ref{integrals} in the appendix, we get

\begin{equation}
    D(Z = 1) = 2(1+p) - 2 + \frac{-4(1+p)p}{2(1+p)} = 0.
\end{equation}

Hence the result because

\begin{equation}
    \left\{\begin{array}{ll}
        q(Z = 1) &= 2(1+p)\\
        \rho(Z = 1) & = 0
    \end{array}\right..
\end{equation}

\appendices
\section{REMINDER ON TRANSFORMS IN RMT}
\label{app:appendixRMT}

The analog of the expectation value in the world of non-commutative random matrices is the normalized trace operator $\tau$ defined as

\begin{equation}
    \tau(\vb{A}) = \frac{1}{N} \E[\Tr(\vb{A})]
\end{equation}which remains finite even as $N$ goes to infinity.

We give a brief recapitulation of various transforms that hold significance in the analysis of eigenvalues statistics in Random Matrix Theory, as they are intricately linked with free probability theory (see e.g. \cite{speicher2011free}, \cite{burda2013free} or \cite{potters2020first}). A standard way to approach the distribution of eigenvalues of a random matrix $\vb{M}$ is to introduce its resolvent matrix $\vb{G}_{\vb{M}}$, defined as

\begin{equation}
    \vb{G}_{\vb{M}}(z) := (z\vb{1} - \vb{M})^{-1}
\end{equation}
and its normalized trace, called the Stieltjes transform of $\vb{M}$, defined as 

\begin{eqnarray}
    \mathfrak{g}_{\vb{M}}(z) &:=& \frac{1}{N} \Tr(\vb{G}_{\vb{M}}(z))\\
    &=& \frac{1}{N}\sum_{k=1}^N \frac{1}{z - \lambda_k}\\
    &\underset{N\to \infty}{\sim}& \int \frac{\rho_{\vb{M}}(\lambda)}{z-\lambda}d\lambda.
\end{eqnarray}

In both expressions, $z$ is in the complex plane but outside the real axis to avoid the poles of $\vb{G}_{\vb{M}}$ which are the eigenvalues of $\vb{M}$. If $z$ is chosen to be not close to the real axis and it turns out that $\mathfrak{g}_{\vb{M}}(z)$ is self-averaging in the large $N$ limit and its value is independent of the specific realization of $\vb{M}$. 

The $\vb{T}$-transform of $\vb{M}$ is defined by

\begin{equation}
    \vb{T}_{\vb{M}}(z) := \vb{M}\left(z\vb{1} - \vb{M}\right)^{-1} = z \vb{G}_{\vb{M}}(z) - 1
\end{equation}

and the $\mathfrak{t}_{\vb{M}}$ is

\begin{equation}
    \mathfrak{t}_{\vb{M}}(z) := \tau\left[\vb{M}\left(z\vb{1} - \vb{M}\right)^{-1}\right] = z\mathfrak{g}_{\vb{M}}(z) - 1.
\end{equation}

The $S$-transform of $\vb{M}$ is then defined as

\begin{equation}\label{definitionS}
    S_{\vb{M}}(t) := \frac{t+1}{t \mathfrak{t}_{\vb{M}}^{-1}(t)}
\end{equation}where $\mathfrak{t}_{\vb{M}}^{-1}(t)$ is the functional inverse of the $\mathfrak{t}$-transform.

Let $\vb{X}$ and $\vb{Y}$ be two random free symmetric matrices and $\vb{W} = \sqrt{\vb{X}}\vb{Y}\sqrt{\vb{X}}$. The result, first obtained in \cite{voiculescu1992free}, reads:

\begin{equation}
    S_{\vb{W}}(t) = S_{\vb{X}}(t) S_{\vb{Y}}(t).
\end{equation}

Moreover, in \cite{bun2016rotational}, a Replica analysis leads to the following subordination relation:

\begin{eqnarray}
    \E[\vb{T}_{\vb{W}}(z)]_{\vb{Y}} &=& \vb{T}_{\vb{X}}[z S_{\vb{Y}}(\mathfrak{t}_{\vb{W}}(z))]\\
    \E[\vb{W}(z\vb{1} - \vb{W})^{-1}]_{\vb{Y}} &=& \vb{X}[z S_{\vb{Y}}(\mathfrak{t}_{\vb{W}}(z))\vb{1} - \vb{X}]^{-1}.
\end{eqnarray}
where $\vb{X}$ and $\vb{Y}$ are two independent random matrices and $\vb{Y}$ is rotationally invariant. It is readily apparent that we also have the following equation:

\begin{equation}
    \E[\vb{\mathfrak{t}}_{\vb{W}}(z)]_{\vb{Y}} = \vb{\mathfrak{t}}_{\vb{X}}(z S_{\vb{Y}}(\mathfrak{t}_{\vb{W}}(z))).
\end{equation}

If $\vb{H}$ designates a $N\times T$ matrix filled with IID standard Gaussian random numbers, we say that

\begin{equation}
    \vb{W} = \frac{1}{T} \vb{H}\vb{H}^T
\end{equation}is a (white) Wishart matrix of parameter $q = \frac{N}{T}$.

In addition, the inverse-Wishart matrix $\vb{\invW}_p$ is defined as 
\begin{equation}
    \vb{\invW}_p = (1 -q) \vb{W}_q^{-1}
\end{equation}where $\vb{W}_q$ is a Wishart matrix with parameter $q$ such that $p = \frac{q}{1-q}$. 

Note that both $\vb{W}_p$ and $\vb{\invW}_p$ are normalized in the sense that $\tau(\vb{W}_p) = \tau(\vb{\invW}_p) = 1$ and furthermore they have the same second-order moment which is $1+p$.

We recall that the $S$-transform of a Wishart matrix $\vb{W}_p$ with $p$ as the parameter is given by

\begin{equation}
S_{\vb{W}_p}(t) = \frac{1}{1 + pt}
\end{equation}
and that of an inverse Wishart matrix $\vb{\invW}_p$ with parameter $p$ is given by

\begin{equation}
S_{\vb{\invW}_p}(t) = 1 - pt
\end{equation}(see a proof in \cite{potters2020first}). 

In many instances, the eigenvalue spectrum of large random matrices is confined to a single finite-size interval, denoted as $[\lambda_{-}, \lambda_{+}]$. This confinement is observed in various random matrix ensembles, including Wigner matrices, where properly normalized eigenvalues fall within the range $\lambda_{-} = -2$ and $\lambda_{+} = +2$, exhibiting a semicircular distribution between these two edges. However, as $N$ becomes large but finite, it is expected that the maximum eigenvalue $\lambda^{+}$ will surpass the upper edge $\lambda_{+}$. In many random matrix ensembles, the fluctuations of $\lambda_{\text{max}}$ around $\lambda_{+}$ follow Tracy-Widom statistics and are of order $N^{-2/3}$. In our specific problem, $\lambda_{\text{max}}$ will experience larger fluctuations beyond the $N^{-2/3}$ scale. Therefore, we will employ a different approach to study these outliers.

\section{Some useful lemmas}
\label{app:lemma}
\begin{lemma}
    Let $\vb{a}_1, ..., \vb{a}_T$ be $T$ independent Gaussian vectors in $\R^N$ with a covariance matrix $\vb{C}$ such that $\tau(\vb{C}) = 1$. Let $1\leq k\neq j\leq N$. Then,

\begin{equation}\label{equality}
    \E\left[\frac{\ga \vb{a}_k, \vb{a}_j \dr^2}{||\vb{a}_k||^2 ||\vb{a}_j||^2}\right] = \frac{\tau(\vb{C}^2)}{N}\geq \frac{1}{N}
\end{equation}
with equality when $\vb{C}$ is the identity matrix.
\end{lemma}

\begin{proof}

In the high-dimensional limit, $||\vb{a}_k||^2, ||\vb{a}_j||^2$ are very close to $N$ because $\tau(\vb{C})=1$. So, we get

\begin{equation}
    \E\left[\frac{\ga \vb{a}_k, \vb{a}_j \dr^2}{||\vb{a}_k||^2 ||\vb{a}_j||^2}\right] = \frac{1}{N^2} \E\left[\ga \vb{a}_k, \vb{a}_j \dr^2\right].
\end{equation}

The Independence of the $\vb{a}_k$ yields

\begin{multline}
    \E\left[\ga \vb{a}_k, \vb{a}_j \dr^2\right]= \\
    \sum_{1\leq v, w\leq N}  \E\left([\vb{a}_k]_v [\vb{a}_k]_w\right)\E\left([\vb{a}_j]_v [\vb{a}_j]_w\right)
\end{multline}

\begin{eqnarray}
    \E\left[\ga \vb{a}_k, \vb{a}_j \dr^2\right]&=& \sum_{1\leq v, w\leq N} [\vb{C}]_{v, w}[\vb{C}]_{v, w}\\
    &=& \sum_{v=1}^N [\vb{C}^2]_{v, v}\\
    &=& \Tr(\vb{C}^2).
\end{eqnarray}

And so, 

\begin{equation}
    \E\left[\frac{\ga \vb{a}_k, \vb{a}_j \dr^2}{||\vb{a}_k||^2 ||\vb{a}_j||^2}\right] = \frac{\tau(\vb{C}^2)}{N}.
\end{equation}

Furthermore, Cauchy–Schwarz inequality easily proves the remainder of the theorem.

\end{proof}

\begin{lemma}\label{lemme1}
Let $\vb{C}$ be a rotationally invariant covariance matrix such that $\tau(\vb{C})=1$. We write $\vb{C}$ as
\begin{equation}
    \begin{pmatrix}
        \vb{C}_{11} & \vb{b}^T\\
        \vb{b} & \vb{\hat{C}}
    \end{pmatrix}
\end{equation}
with $\vb{C}_{11}\in \R, \vb{b}\in \R^{N-1}$ and $\vb{\hat{C}}$ is a $(N-1)\times (N-1)$ matrix. Then, 
    $\vb{b}$ is a Gaussian vector with a squared-norm of $\tau(\vb{C}^2)-1$.
\end{lemma}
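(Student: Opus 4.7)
The plan is to exploit the rotational invariance of $\vb{C}$ restricted to the last $N-1$ coordinates, together with the standard concentration of uniform measure on a high-dimensional sphere.

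First I would identify $\|\vb{b}\|^2$ algebraically. From the block decomposition, the $(1,1)$ entry of $\vb{C}^2$ equals $\sum_{j=1}^N C_{1j}^2 = C_{11}^2 + \|\vb{b}\|^2$, so
\begin{equation}
    \|\vb{b}\|^2 = [\vb{C}^2]_{11} - C_{11}^2.
\end{equation}
By rotational invariance of $\vb{C}$, the law of $[\vb{C}^2]_{ii}$ and of $C_{ii}$ does not depend on $i$, hence $\E\bigl[[\vb{C}^2]_{11}\bigr] = \tfrac{1}{N}\E[\Tr(\vb{C}^2)] = \tau(\vb{C}^2)$ and $\E[C_{11}] = \tau(\vb{C})=1$. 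In the large $N$ limit, both $[\vb{C}^2]_{11}$ and $C_{11}$ are self‑averaging (this is the same concentration mechanism used throughout the paper when replacing traces by their expectations), so
\begin{equation}
    \|\vb{b}\|^2 \;\underset{N\to\infty}{\longrightarrow}\; \tau(\vb{C}^2) - 1,
\end{equation}
which gives the claimed squared-norm.

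Next I would establish the Gaussian character. Let $\vb{R}\in\mathrm{O}(N-1)$ and consider the block-diagonal orthogonal matrix $\tilde{\vb{R}} = \mathrm{diag}(1,\vb{R})$. Since $\vb{C}$ is rotationally invariant, $\tilde{\vb{R}}\,\vb{C}\,\tilde{\vb{R}}^T$ has the same law as $\vb{C}$; reading off the $(1,2{:}N)$ block shows that $\vb{R}\vb{b}$ has the same law as $\vb{b}$. Hence the distribution of $\vb{b}$ is invariant under the full orthogonal group of $\R^{N-1}$, so one may write $\vb{b} = \|\vb{b}\|\,\bm{\omega}$ where $\bm{\omega}$ is uniformly distributed on $S^{N-2}$ and independent of $\|\vb{b}\|$. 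Since $\|\vb{b}\|$ concentrates at $\sqrt{\tau(\vb{C}^2)-1}$ and a uniform vector on $S^{N-2}$, rescaled by $\sqrt{N-1}$, converges to a standard Gaussian in $\R^{N-1}$ (its components are asymptotically i.i.d.\ $\N(0,1)$), the vector $\vb{b}$ is, to leading order, a centered Gaussian vector with covariance $\tfrac{\tau(\vb{C}^2)-1}{N-1}\vb{I}_{N-1}$, whose squared norm is indeed $\tau(\vb{C}^2)-1$.

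The mildly delicate step is the concentration of $\|\vb{b}\|^2 = [\vb{C}^2]_{11}-C_{11}^2$ around its deterministic limit: it relies on the fact that, for a rotationally invariant matrix with a well-behaved spectral distribution, diagonal entries of polynomials of $\vb{C}$ concentrate at their normalized traces. This is a standard Haar-measure concentration fact (following from the spectral decomposition $\vb{C}=\vb{O}^T\vb{D}\vb{O}$ and Levy's concentration on $\mathrm{O}(N)$), and once it is invoked, the combination with spherical symmetry of $\vb{b}$ delivers the lemma.
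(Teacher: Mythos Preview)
Your proof is correct and follows essentially the same route as the paper: the block-diagonal orthogonal conjugation to obtain rotational invariance of $\vb{b}$, and the identity $[\vb{C}^2]_{11}=C_{11}^2+\|\vb{b}\|^2$ combined with self-averaging of diagonal entries to get the squared norm. If anything, you are more careful than the paper in justifying the passage from ``rotationally invariant with concentrating norm'' to ``asymptotically Gaussian'' via the uniform-on-sphere representation.
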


\begin{proof}
If we take 

\begin{equation}
    \vb{O} = \begin{pmatrix}
        1 & \vb{0}\\
        \vb{0} & \vb{\hat{O}}
    \end{pmatrix}
\end{equation}
where $\vb{\hat{O}}$ is a $(N-1)\times (N-1)$ orthogonal matrix, we have

\begin{equation}
    \vb{O}\vb{C} =
    \begin{pmatrix}
        \vb{C}_{11} & \vb{b}^T\\
        \vb{\hat{O}}\vb{b} & \vb{\hat{O}}\vb{\hat{C}}
    \end{pmatrix}.
\end{equation}

As $\vb{C}$ is a rotationally invariant matrix, we can deduce that $\vb{b}$ is likewise a rotationally invariant vector, establishing it as a Gaussian vector. What is its Euclidean norm? One can check that

\begin{equation}
    [\vb{C}^2]_{11} = \vb{C}_{11}^2 + ||\vb{b}||^2.
\end{equation}

In the large $N$ limit, $\vb{C}_{11} = \tau(\vb{C}) = 1$ and $[\vb{C}^2]_{11} = \tau(\vb{C}^2)$. Hence, 

\begin{equation}
    ||\vb{b}||^2 = \tau(\vb{C}^2) - 1.
\end{equation}

\end{proof}

\begin{lemma}\label{lemme2}
Let $\vb{C}$ be a rotationnaly invariant covariance matrix defined by its transform $S_{\vb{C}}$, $\Sigma$ the bottom-right $(N-1)\times (N-1)$ block of $\vb{C}^{-1}$ and $\vb{e}_1, ..., \vb{e}_T$ $T$ independent vectors drawn from $\mathcal{N}(0, \vb{\Sigma})$. Let's define the $(N-1)\times T$ matrix $\vb{E}$ as
\begin{equation}
    \vb{E} := (\vb{e}_k)_{1\leq k\leq T}.
\end{equation}

So, we can express the S-transform of $\vb{Y} = \frac{\vb{E}^T\vb{E}}{N}$ in terms of $S_{\vb{C}}$:
\begin{equation}
        S_{\vb{Y}}(z) = \frac{S_{\vb{C}}\left(\frac{z}{q}\right)}{1 + \frac{z}{q}}.
\end{equation}
\end{lemma}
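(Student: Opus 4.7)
The plan is to pass through the companion matrix $\vb{E}\vb{E}^T/N$ (which is $(N-1)\times(N-1)$ and shares its nonzero spectrum with $\vb{Y}=\vb{E}^T\vb{E}/N$), compute its $S$-transform by the multiplicative free-probability machinery, and then translate the result to $\vb{Y}$ through the aspect-ratio relation between the two Wishart-type matrices.

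First I would write $\vb{E}=\vb{\Sigma}^{1/2}\vb{E}_0$ with $\vb{E}_0$ a $(N-1)\times T$ matrix of IID standard Gaussians. Since $\vb{\Sigma}$ is the Schur complement $\hat{\vb{C}}-\vb{b}\vb{b}^T/\vb{C}_{11}$, and Lemma \ref{lemme1} says $\vb{b}$ has bounded norm $\tau(\vb{C}^2)-1$, $\vb{\Sigma}$ differs from the $(N-1)\times(N-1)$ principal submatrix of $\vb{C}$ by a rank-one perturbation; that submatrix in turn has the same limiting spectrum as $\vb{C}$ itself. Hence $S_{\vb{\Sigma}}=S_{\vb{C}}$ in the large-$N$ limit. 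I would then write
\begin{equation}
\vb{E}\vb{E}^T/N = \tfrac{1}{q}\,\vb{\Sigma}^{1/2}\vb{W}_q\vb{\Sigma}^{1/2},\qquad \vb{W}_q := \vb{E}_0\vb{E}_0^T/T,
\end{equation}
which is a white Wishart of parameter $q$ (up to the vanishing difference $(N-1)/T$ vs.\ $q$). Using freeness of $\vb{\Sigma}$ and $\vb{W}_q$, the multiplicative property $S_{\vb{A}\vb{B}}=S_{\vb{A}}S_{\vb{B}}$, $S_{\vb{W}_q}(t)=1/(1+qt)$, and the scaling $S_{c\vb{A}}=c^{-1}S_{\vb{A}}$, I obtain
\begin{equation}
S_{\vb{E}\vb{E}^T/N}(t) \;=\; \frac{qS_{\vb{C}}(t)}{1+qt}.
\end{equation}

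Next I would move from the $(N-1)\times(N-1)$ matrix $\vb{E}\vb{E}^T/N$ to the $T\times T$ matrix $\vb{Y}$. Comparing traces of powers and using that the aspect ratio is $(N-1)/T\to q$, the two $\mathfrak{t}$-transforms satisfy $\mathfrak{t}_{\vb{Y}}(z)=q\,\mathfrak{t}_{\vb{E}\vb{E}^T/N}(z)$, which inverts to $\mathfrak{t}_{\vb{Y}}^{-1}(u)=\mathfrak{t}_{\vb{E}\vb{E}^T/N}^{-1}(u/q)$. Plugging this into the definition \eqref{definitionS} of the $S$-transform and using the formula above for $S_{\vb{E}\vb{E}^T/N}$, a short algebraic manipulation (essentially $(u+1)/(1+u/q\cdot q)$ simplifies) yields
\begin{equation}
S_{\vb{Y}}(u) \;=\; \frac{S_{\vb{C}}(u/q)}{1+u/q},
\end{equation}
as claimed.

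The only genuinely non-trivial step is the second one: justifying that $\vb{\Sigma}$ can be replaced by $\vb{C}$ for the purposes of computing limiting $S$-transforms, and that the aspect-ratio approximation $(N-1)/T\approx q$ does not alter the limit. Both are standard stability statements in RMT (rank-one perturbations and $o(N)$ rank changes do not move the bulk spectrum), but they must be stated. The algebraic manipulations transporting $S$ between $\vb{E}\vb{E}^T/N$ and $\vb{E}^T\vb{E}/N$ are elementary once the correct aspect ratio factor $q$ is isolated.
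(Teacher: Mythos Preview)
Your proposal is correct and follows essentially the same route as the paper: relate $\vb{Y}=\vb{E}^T\vb{E}/N$ to its companion $\vb{E}\vb{E}^T$ via the aspect-ratio identity on $\mathfrak{t}$-transforms, identify the latter as a colored Wishart with population covariance $\vb{\Sigma}$, argue $S_{\vb{\Sigma}}=S_{\vb{C}}$ in the large-$N$ limit, and combine. The only cosmetic differences are that the paper normalizes the companion by $T$ rather than $N$ (so the extra $q$ appears in the argument of $\mathfrak{t}$ rather than as a scalar prefactor handled via $S_{c\vb{A}}=c^{-1}S_{\vb{A}}$), and that the paper justifies $S_{\vb{\Sigma}}=S_{\vb{C}}$ by noting $\vb{\Sigma}^{-1}$ is the bottom-right $(N-1)\times(N-1)$ block of $\vb{C}^{-1}$ rather than via your rank-one-perturbation argument on $\vb{\Sigma}$ itself---both reach the same conclusion.
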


\begin{proof}
We have

\begin{eqnarray}
    \mathfrak{t}_{\vb{Y}}(z)&=& \frac{1}{T}\Tr\left[\vb{Y}\left(z\vb{1} - \vb{Y}\right)^{-1}\right]\\
    &=& \frac{1}{T}\sum_{n=0}^{+\infty} \frac{1}{z^{n+1}} \Tr(\vb{Y}^{n+1})\\
    &=& \frac{1}{T}\sum_{n=0}^{+\infty} \frac{1}{z^{n+1}} \Tr\left[\left(\frac{\vb{E}^T \vb{E}}{N}\right)^{n+1}\right]\\
    &=& q \sum_{n=0}^{+\infty} \frac{1}{(qz)^{n+1}} \frac{1}{N} \Tr\left[\left(\frac{\vb{E} \vb{E}^T}{T}\right)^{n+1}\right]\\
    &=& q \mathfrak{t}_{\frac{\vb{E}\vb{E}^T}{N}}(qz)
\end{eqnarray}
and so, 

\begin{equation}
    \mathfrak{t}^{-1}_{\vb{Y}}(z) = \frac{1}{q} \mathfrak{t}^{-1}_{\frac{\vb{E} \vb{E}^T}{T}}\left(\frac{z}{q}\right).
\end{equation}

\ref{definitionS} yields

\begin{eqnarray}
    S_{\vb{Y}}(z) &=& \frac{z+1}{z \mathfrak{t}^{-1}_{\vb{Y}}(z)}\\
    &=& \frac{1+z}{z/q \mathfrak{t}^{-1}_{\frac{\vb{E} \vb{E}^T}{T}}\left(\frac{z}{q}\right)}\\
    &=& \frac{1+z}{1 + z/q} S_{\frac{\vb{E} \vb{E}^T}{T}}\left(\frac{z}{q}\right)\\
\end{eqnarray}

$\frac{\vb{E} \vb{E}^T}{T}$ is colored Wishart matrix and we know that

\begin{equation}
    S_{\frac{\vb{E} \vb{E}^T}{T}}(z)(q) = \frac{S_{\vb{\Sigma}(z)}}{1 + qz} = \frac{S_{\vb{C}(z)}}{1 + qz}.
\end{equation}

Indeed, $\vb{\Sigma}^{-1}$ is the bottom-right $(N-1)\times (N-1)$ block of $\vb{C}^{-1}$. In the large $N$ limit, $\vb{\Sigma}^{-1}$ and $\vb{C}^{-1}$ have the same behavior and therefore share the same $S$-transform and consequently, so do $\vb{\Sigma}$ and $\vb{C}$.

Finally, we have proved that

\begin{equation}
    S_{\vb{Y}}(z) = \frac{1+z}{1 + z/q} \frac{S_{\vb{C}}\left(\frac{z}{q}\right)}{1 + z} = \frac{S_{\vb{C}}\left(\frac{z}{q}\right)}{1 + \frac{z}{q}}.
\end{equation}

\end{proof}

\begin{lemma}
    Let $\vb{C}$ be a Wishart matrix of parameter $p$. We write $\vb{C}$ as
\begin{equation}
    \begin{pmatrix}
        \vb{C}_{11} & \vb{b}^T\\
        \vb{b} & \vb{\hat{C}}
    \end{pmatrix}
\end{equation}
with $\vb{C}_{11}\in \R, \vb{b}\in \R^{N-1}$ and $\vb{\hat{C}}$ is a $(N-1)\times (N-1)$ matrix. Let's define the $(N-1)\times (N-1)$ matrix as

\begin{equation}
    \vb{\Sigma} :=  \hat{\vb{C}} - \frac{\vb{b} \vb{b}^T}{\vb{C}_{1 1}}.
\end{equation}

So, $\vb{\Sigma}$ and $\vb{b}$ are independent random variables.

\end{lemma}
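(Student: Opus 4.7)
The plan is to use the explicit construction of the Wishart matrix as a sample covariance. Write $\vb{C} = \frac{1}{M}\vb{H}\vb{H}^T$ with $\vb{H}$ an $N\times M$ matrix of IID $\N(0,1)$ entries (so $p=N/M$), and decompose
\begin{equation}
\vb{H} = \begin{pmatrix} \vb{h}^T \\ \vb{\hat{H}} \end{pmatrix}, \qquad \vb{h}\in\R^M,\ \vb{\hat{H}}\in\R^{(N-1)\times M},
\end{equation}
with $\vb{h}$ and $\vb{\hat{H}}$ independent. Reading off the blocks immediately gives $\vb{C}_{11}=\|\vb{h}\|^2/M$, $\vb{b}=\vb{\hat{H}}\vb{h}/M$, and $\hat{\vb{C}}=\vb{\hat{H}}\vb{\hat{H}}^T/M$.

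Next I would compute the Schur complement explicitly and rewrite it in a projector form:
\begin{equation}
\vb{\Sigma} \;=\; \hat{\vb{C}} - \frac{\vb{b}\vb{b}^T}{\vb{C}_{11}} \;=\; \frac{1}{M}\vb{\hat{H}}\left(\vb{I}_M - \frac{\vb{h}\vb{h}^T}{\|\vb{h}\|^2}\right)\vb{\hat{H}}^T \;=\; \frac{1}{M}\vb{\hat{H}}\vb{P}_{\vb{h}^\perp}\vb{\hat{H}}^T,
\end{equation}
where $\vb{P}_{\vb{h}^\perp}$ is the orthogonal projector in $\R^M$ onto the hyperplane $\vb{h}^\perp$. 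Thus $\vb{\Sigma}$ depends on $\vb{\hat{H}}$ only through $\vb{\hat{H}}\vb{P}_{\vb{h}^\perp}$, while $\vb{b}$ depends on $\vb{\hat{H}}$ only through the complementary projection $\vb{\hat{H}}(\vb{I}_M-\vb{P}_{\vb{h}^\perp})$, since $(\vb{I}_M-\vb{P}_{\vb{h}^\perp})\vb{h}=\vb{h}$.

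Conditioning on $\vb{h}$, the matrix $\vb{\hat{H}}$ remains IID Gaussian. A classical Gaussian fact is that projections onto orthogonal subspaces are independent, so $\vb{\hat{H}}\vb{P}_{\vb{h}^\perp}$ and $\vb{\hat{H}}(\vb{I}_M-\vb{P}_{\vb{h}^\perp})$ are independent given $\vb{h}$, which gives the conditional independence $\vb{\Sigma}\perp\vb{b}\mid\vb{h}$.

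The main (though mild) subtlety is promoting this conditional independence to genuine marginal independence. The key point is that the conditional law of $\vb{\Sigma}$ given $\vb{h}$ does not actually depend on $\vb{h}$: by rotational invariance of IID Gaussians, the distribution of $\vb{\hat{H}}\vb{P}_{\vb{h}^\perp}$ depends only on the dimension of the image of $\vb{P}_{\vb{h}^\perp}$, which is always $M-1$. Consequently $\vb{\Sigma}\mid\vb{h}$ is distributed as $\frac{1}{M}\vb{G}\vb{G}^T$ for a fixed $(N-1)\times(M-1)$ IID Gaussian $\vb{G}$, so $\vb{\Sigma}$ is marginally independent of $\vb{h}$. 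Combining this with $\vb{\Sigma}\perp\vb{b}\mid\vb{h}$ and integrating out $\vb{h}$ via the tower property yields $\vb{\Sigma}\perp\vb{b}$, which is the claim.
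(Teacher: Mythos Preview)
Your argument is correct. The paper takes a different route: it writes down the Wishart density
\[
P(\vb{C})\;\propto\;(\det\vb{C})^{(M-N-1)/2}\exp\!\left[-\tfrac{M}{2}\Tr(\vb{C})\right],
\]
performs the change of variables $(\vb{C}_{11},\vb{b},\hat{\vb{C}})\mapsto(\vb{C}_{11},\vb{b},\vb{\Sigma})$ (unit Jacobian), and uses the block identities $\det\vb{C}=\vb{C}_{11}\det\vb{\Sigma}$ and $\Tr\vb{C}=\vb{C}_{11}+\Tr\vb{\Sigma}+\|\vb{b}\|^2/\vb{C}_{11}$ to see that the density factors as a function of $\vb{\Sigma}$ times a function of $(\vb{C}_{11},\vb{b})$.

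Your construction-based approach (essentially Bartlett's decomposition) has two advantages worth noting. First, it does not require $\vb{C}$ to admit a density, so it covers the singular regime $p>1$ that the paper in fact uses later; the density argument, as written, is only literally valid for $M\geq N$. Second, your proof identifies the law of $\vb{\Sigma}$ for free as that of $\tfrac{1}{M}\vb{G}\vb{G}^T$ with $\vb{G}\in\R^{(N-1)\times(M-1)}$ standard Gaussian, i.e.\ a rescaled Wishart of parameter $(N-1)/(M-1)$. Conversely, the paper's approach gives the full joint density of $(\vb{C}_{11},\vb{b},\vb{\Sigma})$ in one line, which is convenient if one also wants the marginals of $\vb{C}_{11}$ or of $\vb{b}\mid\vb{C}_{11}$.
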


\begin{proof}   

It's a well-known fact that

\begin{equation}\label{distribution}
    P(\vb{C}) \propto (\det \vb{C})^{(T-N-1)/2)} \exp\left[\frac{-T}{2} \Tr(\vb{C})\right]
\end{equation}

By using the coefficient $C_{11}$ as the pivot, we can eliminate $\vb{b}$ from the first column while preserving the determinant, as transvections have a determinant of $1$. Thus, 

\begin{eqnarray}
    \det(\vb{C}) &=& \begin{vmatrix}
        \vb{C}_{11} & \vb{b}^T\\
        \vb{b} & \Tilde{C}.
    \end{vmatrix}\\
    &=&\begin{vmatrix}
        \vb{C}_{11} & \vb{b}^T\\
        \vb{0} & \Tilde{C} - \frac{\vb{b}\vb{b}^T}{\vb{C}_{11}}
    \end{vmatrix}\\
    &=& \vb{C}_{11} \det(\vb{\Sigma})\label{determinant}.
\end{eqnarray}

Furthermore, 

\begin{eqnarray}\label{trace}
    \Tr(\vb{C}) &=& \vb{C}_{11} + \Tr(\vb{\hat{C}})\\
    &=& \vb{C}_{11} + \Tr(\vb{\Sigma}) + \frac{1}{\vb{C}_{11}} ||\vb{b}||^2.
\end{eqnarray}

Using the structure of \eqref{distribution} with \eqref{determinant}, \eqref{trace} and the fact that the transformation
\begin{equation}
    (\vb{C}_{11}, \vb{b}, \vb{\Tilde{C}}) \mapsto \left(\vb{C}_{11}, \vb{b}, \vb{\Sigma} = \vb{\Tilde{C}} - \frac{\vb{b}\vb{b}^T}{\vb{C}_{11}}\right)
\end{equation}has a Jacobian equal to $1$, we obtain the result.

\end{proof}

\begin{lemma}\label{integrals}
    Let's define $I_1$ and $I_2$ as
    \begin{eqnarray}
    I_1(Z) &=& \E\left[\frac{a^2 f^2(a^2)}{Z - f(a^2)} + a^2 f(a^2)\right]\\
    I_2(Z) &=&\E\left[\frac{f(a^2)}{Z - f(a^2)}\right]
\end{eqnarray}
with $a$ a standard Gaussian variable and $f$ the function $y\mapsto 1 - \frac{1}{y}$.
So, 

\begin{eqnarray}
    I_1(1) &=&2\\
    I_2(1) &=&0\\
    I_1'(1) & = & -10\\
    I_2'(1) &=& -2.
\end{eqnarray}
\end{lemma}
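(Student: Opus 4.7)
The plan is a direct computation. The three ingredients are the explicit form $f(a^2) = (a^2-1)/a^2$, the standard Gaussian moments $\E[a^{2k}] = (2k-1)!!$ (so $\E[a^2] = 1$, $\E[a^4] = 3$, $\E[a^6] = 15$), and differentiation under the integral sign.

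First I would exploit the fact that at $Z = 1$ the denominator collapses: $1 - f(a^2) = 1 - (1 - 1/a^2) = 1/a^2$. This turns every rational expression appearing in the two integrands into a polynomial in $a^2$. Explicitly,
\[
\frac{f(a^2)}{1 - f(a^2)} = a^2 - 1, \qquad a^2 f(a^2) = a^2 - 1, \qquad \frac{a^2 f^2(a^2)}{1 - f(a^2)} = (a^2 - 1)^2.
\]
Taking expectations then gives $I_2(1) = \E[a^2 - 1] = 0$ and $I_1(1) = \E[(a^2-1)^2] + \E[a^2-1] = \mathrm{Var}(a^2) = 2$.

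Next I would compute the derivatives by differentiating under the integral sign. Because $Z - f(a^2) = (Z-1) + 1/a^2 \geq 1/a^2 > 0$ for $Z \geq 1$, the integrands are smooth in $Z$ on a neighborhood of $Z=1$ with $Z$-derivatives dominated by polynomials in $a^2$, so the exchange is legitimate by dominated convergence; this is the only point that requires any care, and it is straightforward because the algebraic simplification at $Z=1$ cancels the apparent singularity at $a = 0$ coming from the $1/a^2$ inside $f$. The $Z$-independent piece $\E[a^2 f(a^2)]$ of $I_1$ drops out, leaving
\[
I_1'(Z) = -\E\left[\frac{a^2 f^2(a^2)}{(Z - f(a^2))^2}\right], \qquad I_2'(Z) = -\E\left[\frac{f(a^2)}{(Z - f(a^2))^2}\right].
\]
Evaluating at $Z = 1$ collapses these integrands to $a^6 - 2a^4 + a^2$ and $a^4 - a^2$ respectively, yielding $I_1'(1) = -(15 - 6 + 1) = -10$ and $I_2'(1) = -(3 - 1) = -2$.
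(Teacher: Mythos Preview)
Your proof is correct and follows essentially the same route as the paper: simplify via $1-f(a^2)=1/a^2$, differentiate under the expectation, and evaluate using the Gaussian moments $\E[a^{2k}]=(2k-1)!!$. The only cosmetic difference is that the paper first clears denominators (rewriting $I_j(Z)$ with denominator $Za^2-a^2+1$) before differentiating, whereas you differentiate and then set $Z=1$; your added dominated-convergence remark is a justification the paper leaves implicit.
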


\begin{proof}   
We have
\begin{eqnarray}
    I_1(1) &=& \E\left[\frac{a^2 f^2(a^2)}{1-f(a^2)}\right]\\
    &=& \E(a^4 f^2(a^2))\\
    &=& \E((a^2 - 1)^2)\\
    &=& \E(a^4) - 2\E(a^2) + 1\\
    &=& 3 - 2 + 1\\
    &=& 2.
\end{eqnarray}

\begin{eqnarray}
    I_2(1) &=& \E\left[\frac{f(a^2)}{1-f(a^2)}\right]\\
    &=& \E(a^2 f(a^2))\\
    &=& \E(a^2-1)\\
    &=& 1 - 1\\
    &=&0.
\end{eqnarray}

We know that

\begin{eqnarray}
    I_1(Z) &=&\E\left[\frac{a^2 \left(f(a^2)\right)^2}{Z-f(a^2)}\right]\\
      &=& \E\left[\frac{a^2 \left(1 - \frac{1}{a^2}\right)^2}{Z-1 + \frac{1}{a^2}}\right]\\
     &=& \E\left[\frac{(a^2 - 1)^2}{Za^1 - a^2 + 1}\right]
\end{eqnarray}
and so, 

\begin{equation}
    I_1'(Z) = \E\left[\frac{-a^2(a^2 - 1)^2}{(Za^1 - a^2 + 1)^2}\right].
\end{equation}

Hence, 

\begin{eqnarray}
    I_1'(1) &=& \E[-a^2(a^2 - 1)^2]\\
    &=& -\E(a^6) + 2 \E(a^4) - \E(a^2)\\
    &=& - 15 + 2.3 - 1\\
    &=& -10.
\end{eqnarray}

In the same way, we have

\begin{eqnarray}
    I_2(Z) &=&\E\left[\frac{\left(f(a^2)\right)}{Z-f(a^2)}\right]\\
      &=& \E\left[\frac{a^2 \left(1 - \frac{1}{a^2}\right)}{Z-1 + \frac{1}{a^2}}\right]\\
     &=& \E\left[\frac{(a^2 - 1)}{Za^1 - a^2 + 1}\right]
\end{eqnarray}
and so,

\begin{align}
    I_2'(1) &= \E(-a^2(a^2 - 1))\\
    &= -\E(a^4) + \E(a^2) \\
    &= - 3 + 1 = -2
\end{align}

\end{proof}

\section*{Acknowledgment}

The authors would like to thank Federico Ricci-Tersenghi for pointing out the error in \cite{potters2020first} (corrected in \cite{erratum}) which made them think again about the phase retrieval problem.

\ifCLASSOPTIONcaptionsoff
  \newpage
\fi

\bibliographystyle{IEEEtran}
\bibliography{References}

\end{document}